\documentclass[a4paper,UKenglish,cleveref, autoref, thm-restate]{lipics-v2021}

\hideLIPIcs 

\pdfoutput=1 


\bibliographystyle{plainurl}

\title{A practical algorithm for $2$-admissibility} 



\author{Christine Awofeso}{Birkbeck, University of London, UK}%
{cawofe01@student.bbk.ac.uk}
{}
{}

\author{Patrick Greaves}{Birkbeck, University of London, UK}%
{p.greaves@bbk.ac.uk}
{}%
{}

\author{Oded Lachish}{Birkbeck, University of London, UK}%
{o.lachish@bbk.ac.uk}%
{https://orcid.org/0000-0001-5406-8121}
{}

\author{Felix Reidl}{Birkbeck, University of London, UK}%
{f.reidl@bbk.ac.uk}%
{https://orcid.org/0000-0002-2354-3003}%
{}

\authorrunning{C. Awofeso, P. Greaves, O. Lachish, F. Reidl} 

\Copyright{Christine Awofeso, Patrick Greaves, Oded Lachish, Felix Reidl} 

\begin{CCSXML}
<ccs2012>
   <concept>
       <concept_id>10003752.10003809.10003635</concept_id>
       <concept_desc>Theory of computation~Graph algorithms analysis</concept_desc>
       <concept_significance>500</concept_significance>
       </concept>
 </ccs2012>
\end{CCSXML}

\ccsdesc[500]{Theory of computation~Graph algorithms analysis}

\keywords{Sparse graphs, admissibility} 

\category{} 

\relatedversion{} 




\nolinenumbers 

\EventEditors{John Q. Open and Joan R. Access}
\EventNoEds{2}
\EventLongTitle{42nd Conference on Very Important Topics (CVIT 2016)}
\EventShortTitle{CVIT 2016}
\EventAcronym{CVIT}
\EventYear{2016}
\EventDate{December 24--27, 2016}
\EventLocation{Little Whinging, United Kingdom}
\EventLogo{}
\SeriesVolume{42}
\ArticleNo{XX}


\usepackage{amsmath,amsfonts,amssymb}

\usepackage{longtable}
\usepackage{url}

\usepackage[dvipsnames]{xcolor}

\usepackage{xspace}

\usepackage{xargs}
\usepackage{xifthen}
\usepackage{framed}

\usepackage{pdflscape}
\usepackage{booktabs}

\usepackage{marginnote} 

\usepackage{thmtools}
\usepackage{thm-restate}

\setlength {\marginparwidth }{1.7cm}

\usepackage{subcaption}
\usepackage{multirow}  
\usepackage{makecell}  


\usepackage{xpunctuate}
\usepackage[all,british]{foreign} 
\redefnotforeign[ie]{i.e\xperiodafter} 
\redefnotforeign[eg]{e.g\xperiodafter}

\usepackage[algo2e, ruled, vlined, linesnumbered, nofillcomment, noend]{algorithm2e}

\SetKwInput{KwInput}{Input}                
\SetKwInput{KwOutput}{Output}              


\SetCommentSty{commentstyle}
\DontPrintSemicolon

\usepackage{tikz}

\usepackage[full,small]{complexity}


\newcommand*\varrule[1][0.4pt]{\leavevmode\leaders\hrule height#1\hfill\kern0pt}
\setlength{\lightrulewidth}{0.8pt}

\newcommand{\any}{{\mathord{\color{black!33}\bullet}}}

\newcommand{\GG}{\mathbb G}
\def\adm_#1{ \operatorname{adm}_{#1} }
\def\wcol_#1{ \operatorname{wcol}_{#1} }
\def\scol_#1{ \operatorname{scol}_{#1} }

\newcommand{\orderstrict}{<}
\def\Vleft_#1{L_{#1}}

\newcommand{\pp}{\operatorname{pp}}
\newcommand{\match}{\operatorname{mm}}
\newcommand{\Candidates}{\mathsf{Cand}}

\makeatletter  
\newcommand*{\defeq}{\mathrel{\rlap{%
                     \raisebox{0.3ex}{$\m@th\cdot$}}%
                     \raisebox{-0.3ex}{$\m@th\cdot$}}%
                     =}
\makeatother

\newcommand{\starred}{{}^\star} 


\newcommand{\Neigh}[2]{%
		\ifthenelse{\equal{#2}{}}{N^{#1}}{N_{#1}^{#2}}%
}

\newcommand{\Main}{\texttt{Main}\xspace}
\newcommand{\Update}{\texttt{\ref{alg:Update}}\xspace}
\newcommand{\Augmenting}{\texttt{\ref{alg:Augmenting}}\xspace}


\def\Dvorak{Dvo\v{r}\'{a}k\xspace}

\newcommand*{\circled}[4]{\tikz[baseline=(char.base)]{
             \node[circle,fill=#3,draw=#4,draw,inner sep=1pt,minimum size=1.2em] (char) {\color{#2} #1};}}

\newcommand{\Mark}[1]{\circled{#1}{gray}{white}{gray}}
\makeatletter
\newcommand{\codelabel}[2]{%
	\protected@write \@auxout {}{\string \newlabel {#1}{{#2}{}}}
	\Mark{#2}
}
\newcommand{\coderef}[1]{%
	\Mark{\ref{#1}}\xspace%
}
\makeatother

\renewenvironmentx{leftbar}[2][1=0.5pt, 2=5pt]{%
  \MakeFramed {\advance\hsize-\width \FrameRestore}}%
{\endMakeFramed}

\newcommand{\Marki}{{}^{\scalebox{.55}{\Mark{i}}}}
\newcommand{\One}{{}^{\scalebox{.55}{\Mark{1}}}}
\newcommand{\Two}{{}^{\scalebox{.55}{\Mark{2}}}}
\newcommand{\Three}{{}^{\scalebox{.55}{\Mark{3}}}}
\newcommand{\Four}{{}^{\scalebox{.55}{\Mark{4}}}}

\newcommand{\numnetworks}{214\xspace}


\begin{document}
\maketitle

\begin{abstract}
	The $2$-admissibility of a graph is a promising measure to identify real-world networks which have an algorithmically favourable structure. 
	In contrast to other related measures, like the weak/strong $2$-colouring numbers or the maximum density of graphs that appear as~$1$-subdivisions, the $2$-admissibility can be computed in polynomial time. However, so far these results are theoretical only and no practical implementation to compute the $2$-admissibility exists.

	Here we present an algorithm which decides whether the $2$-admissibility of an input graph~$G$ is at most~$p$ in time~$O(p^4 |V(G)|)$ and space~$O(|E(G)| + p^2)$. The simple structure of the algorithm makes it easy to implement.
  We evaluate our implementation on a corpus of \numnetworks real-world networks
  and find that the algorithm runs efficiently even on networks with millions of edges, that it has a low memory footprint, and that indeed many networks have a small $2$-admissibility. 
\end{abstract}

\section{Introduction}\label{sec:Intro}

Our work here is motivated by efforts to apply algorithms from sparse graph
theory to real-world graph data, in particular algorithms that work
efficiently if certain \emph{sparseness measures} of the input graph are
small. In algorithm theory, specifically from the purview of parametrized algorithms, this approach has been highly successful: by designing algorithms
around sparseness measures like treewidth~\cite{courcelleMonadicSecondorderLogic1990,arnborgEasyProblemsTreedecomposable1991,reedAlgorithmicAspectsTree2003},
maximum degree~\cite{seeseLinearTimeComputable1996}, the size of an excluded minor~\cite{demaineAlgorithmicGraphMinor2005}, or the size of a `shallow' excluded minor~\cite{BndExpII,dvovrak2010deciding}, many hard problems become fixed-parameter tractable, meaning that they can be solved in time~$f(k,\xi) n^{O(1)}$ for some function~$f$, where~$\xi$ is the sparseness measure and~$k$ the problem parameter (\eg solution size). 

For real-world applications, we find a tension regarding sparseness measures: if a measure is too restrictive, it will be large for most graphs in practice. If it is too permissive, then its does not provide enough leverage to design fast algorithms. We aim to identify measures which strike a balance, meaning measures that are small on many real-world networks \emph{and} provide an algorithmic benefit. Additionally, we would like to be able to compute such measures efficiently.

A good starting point here is the \emph{degeneracy} measure, which captures the maximum density of all subgraphs. 
Specifically, a graph is $d$-degenerate if its vertices can be ordered in such a way that every vertex~$x$ has at most~$d$ neighbours that are smaller than~$x$. Such orderings cannot exist for \eg graphs that have a high minimum degree or contain big cliques as subgraphs. For most real-world graphs, however, the degeneracy is indeed small and can be computed quickly. In a survey of 206 networks from various domains by Drange \etal~\cite{drangeComplexityDegenerate2023}, the degeneracy averaged about~23 with a median of~9 (we also replicated this data, see notes on experimental data below). Further,
small degeneracy can improve certain computational tasks significantly. For
example, counting all cliques in a $d$-degenerate graph~$G$ is possible in
time $O(d 3^{d/3} |V(G)|)$~\cite{eppsteinDegenCliques2010} while in general
this problem is $\#W[1]$-hard~\cite{flumCountingHardness2004}\footnote
{This means that counting cliques of size~$k$ is, modulo some
complexity-theoretic assumption, not possible in time $f(k) n^{O(1)}$  for
any function~$f$. }. While we will use subgraph counting as a running example, many other types of problems become tractable on degenerate graphs
(see \eg \cite{philipDegen2012,drangeComplexityDegenerate2023,fomin2021toct,raman2008cocoa}).

However, degeneracy has clear limitations. Assume, for example, that we want to count how often a graph~$H$ appears as a subgraph in another graph~$G$. Let~$H^{(1)}$ and $G^{(1)}$ be the graphs obtained from $H$ and $G$, respectively, by subdividing each edge once. Counting $H^{(1)}$ in~$G^{(1)}$ is then the same as counting $H$ in~$G$, except that $G^{(1)}$ is now $2$-degenerate. Therefore counting~$H^{(1)}$ in $2$-degenerate graphs must be hard\footnote{Here we specifically mean $\#W[1]$-hard, but other hardness results probably translate as well via this simple reduction} if counting $H$ in general graphs is. 
Using stronger complexity-theoretic assumptions, namely the \emph{triangle counting conjecture}\footnote{
There exists a constant~$\epsilon$ such that finding a triangle in a graph cannot be done in time less than~$O(m^{1+\epsilon-o(1)})$.
}, Bera, Pashanasangi, and Seshadhri~\cite{beraCountingSix2020} showed that counting subgraphs with six or more vertices in $d$-degenerate graphs is not possible in time~$o(|E(G)|^{1+\epsilon} f(d))$, where $\epsilon > 0$ stems from the conjecture, thus ruling out linear-time algorithms in this setting.

\smallskip
\noindent
This leaves us with the question whether a sparseness measure exists that
\begin{enumerate}
  \item is small in (many) real-world networks,
  \item allows us to solve problems which remain hard on degenerate graphs,
  \item is computable with few resources, and
  \item can be computed exactly.
\end{enumerate}
We have already provided our motivations for the first two points; the third point derives from our goal to make this measure usable for practical applications on graphs with millions of edges.
The last point warrants some further justification. Our effort of bringing theoretical algorithms into practice means that we need to 
identify domains in which the `sparseness measures approach' could provide more efficient algorithms, since the domains inform what type of problems are of interest. This task is much easier with a sparseness measure whose value is known exactly than with a sparseness measure whose value can only be approximated or computed heuristically.

We propose that the \emph{$2$-admissibility} has the above four properties. As a rough idea, if the degeneracy measures the maximum density of subgraphs in a host graph, the $2$-admissibility measures the maximum density of graphs that appear as $\leq 1$-subdivisions\footnote{A $\leq 1$-subdivision of a graph~$H$ is obtained by subdividing some edges of~$H$ one time.} in the host graph.
We postpone a deeper justification to Section~\ref{sec:measures}, since it involves quite a few technical details, but suffices to say that in contrast to other comparable measures, we know that the \emph{$2$-admissibility} can be computed exactly in polynomial time~\cite{dvorakDomset2013}. Further, we know from previous results~\cite{reidlCounting2023} that graphs like $C_6$, $W_6$, or $K_{5,5}$ can be counted in linear time in graphs with bounded $2$-admissibility, but not in degenerate graphs due to the above mentioned lower bounds by Bera \etal.

\textbf{Theoretical contribution.} 
We present a simple algorithm which decides whether the $2$-admissibility of
an input graph~$G$ is at most~$p$ in time~$O(p^4 |V(G)|)$ using~$O (|E(G)| + p^2)$ space. This improves on a previous algorithm described by \Dvorak~\cite{dvorakDomset2013} with running time $O(|V(G)|^{2p+4})$ and even beats the $2$-approximation with running time $O(|V(G)|^3)$ described in the same paper when~$p < \sqrt{n}$. \Dvorak also provides a theoretical linear-time algorithm for $2$-admissibility in \emph{bounded expansion} classes (which include \eg planar graphs, bounded-degree graphs and classes excluding a minor or topological minor; see Section~\ref{sec:measures} for a definition), however this algorithm relies on a data structure for dynamic first-order model checking~\cite{dvorakFOBndExp13} which certainly is not practical. Our algorithm runs in linear time as long as the $2$-admissibility is a constant, this includes graph classes that do not have bounded expansion.

We designed our algorithm with three goals in mind: The algorithm should have a very low memory footprint (as this is the more strictly constrained resource), it should be easy to implement, and it should be designed `optimistically', meaning that it should only perform work if strictly necessary. We believe that we reached all three goals. On the flip-side, this means that the analysis of the algorithm is more involved. 

\textbf{Implementation and experiments.} 
We implemented our algorithm in Rust and tested it on a collection of~\numnetworks real-world networks, ranging in size from hundreds of edges to 11~million edges. The source code, test data and experimental results can be found at \url{https://github.com/chrisateen/admissibillity-rust}. Experimental results are also attached in the Appendix of this paper.

We ran experiments on a relatively modest machine, proving that our algorithm is indeed very resource efficient. Moreover, our results show that many real-world networks indeed have a small $2$-admissibility.

\section{Preliminaries}\label{sec:prelim}

In this paper all graphs are simple. 
\marginnote{$|G|$, $\|G\|$, $\match(G)$}
For a graph $G$ we use $V(G)$ and $E(G)$ to refer to its vertex set and edge set,
respectively. We use the short-hands $|G| \defeq |V(G)|$ and $\|G\| \defeq |E(G)|$. We call a matching between two sets~$X, Y$ \emph{maximum} if it is the largest possible matching and \emph{maximal} if the matching cannot be increased by adding an edge. We write $\match(G)$ to denote the maximum matching number of~$G$.

\marginnote{$x_1 P x_\ell$, avoids}
For sequences of vertices~$x_1,x_2,\ldots,x_\ell$, in particular paths, we use notations like~$x_1 P x_\ell$,~$x_1 P$ and~$Px_\ell$ to denote the  subsequences $x_1,x_2,\ldots,x_\ell$, $x_1,x_2,\ldots,x_{\ell-1}$ and $x_2,\ldots,x_\ell$, respectively.
Note that further on we sometimes refer to paths as having a
\emph{start-point} and an \emph{end-point}, despite the fact that they are not directed. We do so to simplify the reference to the vertices involved.
A path~$P$ \emph{avoids} a vertex set~$L$ if no interior vertex of~$P$ is contained in~$L$. Note that we allow both endpoints to be in~$L$.

\marginnote{$N_L$, $N_R$, $T^2_L$}
In the following, we will maintain a partition~$L \cup R = V(G)$. Given these two sets, we will write~$N_L(v) \defeq N(v) \cap L$ and~$N_R(v) \defeq N(v) \cap R$ for~$v \in G$. Moreover, we will often need the set of vertices~$T^2_L(v)$ which can be reached from~$v$ via paths of length two that avoid~$L$ and are not in $N(v)$, that is,
$
	T^2_L(v) \defeq \{ x \in V(G)\setminus N(v) \mid N_R(x) \cap N(v) \neq \emptyset \}
$.

\marginnote{$\GG$, $\orderstrict_\GG$}\noindent
In Section~\ref{sec:measures} we will need to define the notion of an \emph{ordered graph}. Given a graph~$G$ and a total ordering~$\orderstrict$ on its vertex set, we define the ordered graph $\GG = (G, \orderstrict)$. We treat $\GG$ as a graph to allow notations like $E(\GG)$, $V(\GG)$ etc.  Given an
ordered graph $\GG$, we write $\orderstrict_\GG$ to denote its ordering relation. We imagine the graph ordered from left (small) to right (large) and will use `left' and `right' to talk about relations between vertices, \eg we say `$u$ lies left of~$v$' to mean $u \orderstrict_\GG v$.

\marginnote{$\min_{\orderstrict_\GG}$, $N^-(\any)$, $\Vleft_\any$}
For a vertex set~$X$ of~$\GG$, we write
$\min_{\orderstrict_\GG} X$ to denote the smallest (leftmost) vertex in~$X$ according to~$\orderstrict_\GG$. For a vertex~$u \in V(\GG)$ we write~$N^-(u) := \{v \in N(u) \mid v \orderstrict_\GG u\}$ for the \emph{left neighbourhood},
\ie all neighbours of~$u$ that came before~$u$ in the ordering. Finally, for~$u \in V(\GG)$ we define the $\emph{prefix set}$ $\Vleft_u$ as the set of all vertices $w\in V(\GG)$ such that $w <_\GG u$.

\section{Admissibility and related measures}\label{sec:measures}

The $2$-admissibility (then just `admissibility') was originally defined in by Kierstead and Trotter~\cite{kiersteadPlanarGraphColoring1994} in their study of planar graph colourings.
The more general notion of \emph{$r$-admissibility} was introduced by \Dvorak~\cite{dvorakDomset2013}. Strictly speaking, $r$-admissibility is a family of measures where $r$ governs how `deep' into the graph we look. We present here the general definition first and then return to the case $r=2$.

\marginnote{$(r,X)$-path packing}
\noindent
An \emph{$(r,X)$-path packing rooted at $v$} is a family of paths $\mathcal P$ with the following properties:
\begin{enumerate}
  \item Every path in $\mathcal P$ has length at most $r$,
  \item all paths have $v$ as start-point and a vertex of $X$ as an end-point,
  \item all paths are vertex-disjoint with the exception of $v$, and
  \item all paths in $\mathcal P$ avoid $X$.
\end{enumerate}

\marginnote{$\pp^r_X$}
\noindent
We write $\pp^r_X(v)$ to denote the maximum size of a $(r,X)$-path packing rooted at $v$.

\marginnote{\small$r$-admissi\-bility}\noindent
The $r$-admissibility $\adm_r(\GG)$ of an ordered graph $\GG$ is now defined as\footnote{Note that some authors define the admissibility as $\max_{v \in V(\GG)} \pp_{\Vleft_v}^r(v)+1$ to align with other measures.
}
\vspace*{-5pt}
\begin{align*}
	\adm_r(\GG) &:= \max_{u \in V(\GG)} \pp_{\Vleft_u}^r(u),
\end{align*}
recall that~$L_u$ is the prefix set of all vertices left of~$u$ in~$\GG$.
The $r$-admissibility $\adm_r(G)$ of a graph $G$ is the minimum of $\adm_r(\GG)$ taken over all orderings $\GG$ of~$G$.

\marginnote{strong and weak colouring}
Two closely related measures are the \emph{strong $r$-colouring number} $\scol_r$ and the \emph{weak $r$-colouring number} $\wcol_r$. Both are based on measuring the size of certain `reachable' sets from every vertex in an ordered
graph~$\GG$. Specifically, we definde the \emph{weakly $r$-reachable} set $W_\GG^r(u)$ and the \emph{strongly $r$-reachable set} $S_\GG^r(u)$ as
\begin{align*}
  S^r_\GG(u) &= \{ v \orderstrict_\GG u \mid \exists~\text{$u$-$v$-path~$P$ of length at most~$r$ such that}~ u = \min_{\orderstrict_\GG} (P\setminus v) \} \\
  W^r_\GG(u) &= \{ v \orderstrict_\GG u \mid \exists~\text{$u$-$v$-path~$P$ of length at most~$r$ such that}~ v = \min_{\orderstrict_\GG} P \}   
\end{align*}
In words, the vertices~$v$ that are strongly $r$-reachable from~$u$ lie to the left of~$u$ and are connected to~$u$ via a path of length~$\leq r$ that only uses vertices larger than~$u$, with the exception of~$v$ itself. The vertices
that are weakly $r$-reachable from~$u$ also lie to the left of~$u$, but the paths conneting them to~$u$ are now allowed to use vertices right of~$v$. 

The example below should help tease apart these definitions. Depicted are the left neighbourhood~$N^-(x)$ of~$x$, a maximum $2$-path packing rooted at~$x$, the set of strongly $2$-reachable vertices~$S^2(x)$, and the set of weakly
$2$-reachable vertices~$W^2(x)$.
\begin{center}
  \includegraphics[width=0.8\textwidth]{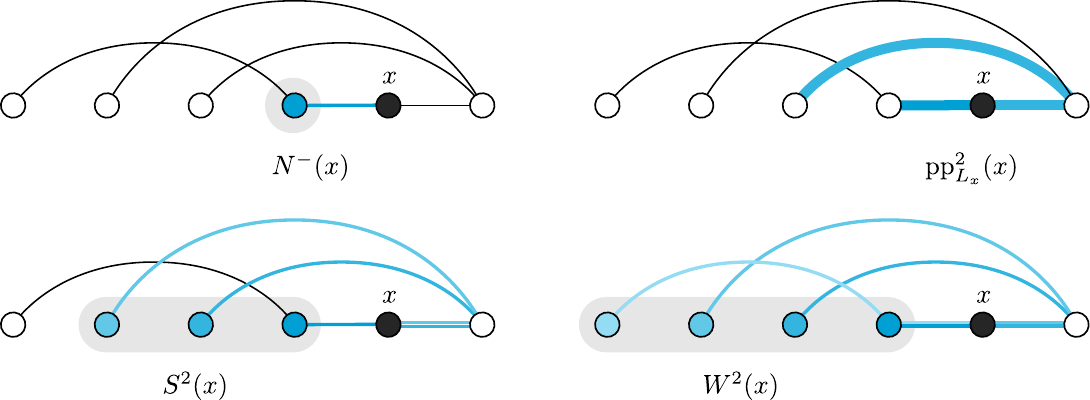}
\end{center}

\vspace*{-5pt}
\noindent
The two measure~$\scol_r$ and~$\wcol_r$ of an ordered graph~$\GG$ are now defined, respectively, as 
\[
  \scol_r(\mathbb G) := \max_{u \in V(\GG)} |S^r(u)| \quad \text{and} \quad
  \wcol_r(\mathbb G) := \max_{u \in V(\GG)} |W^r(u)|.
\]
For an unordered graph~$G$, the colouring numbers $\scol_r(G)$ and $\wcol_r(G)$ are again defined as the minimum over all possible orderings~$\mathbb G$ of~$G$.

All three measures are generalizations of the degeneracy. Specifically, for~$r=1$ we have that $S^1(u) = W^1(u) = N^-(u)$ and~$\pp^1_{L_u} = |N^-(u)|$, therefore $\adm_1(\mathbb G) = \scol_1(\mathbb G) = \wcol_1(\mathbb G)$ and all three measures equal the degeneracy.
For~$r \geq 2$, the measures are usually different. However, they
are closely related in the sense that for any ordered graph~$\mathbb G$ and any~$r \in \mathbb N$, it holds that 
\[
    \adm_r(\mathbb G) \leq \scol_r(\mathbb G) \leq \wcol_r(\mathbb G) 
    \leq (r^2 \adm_r(\mathbb G))^r,
\],
where the first two inequalities follow quite straightforward from the definitions and the last inequality is due to \Dvorak~\cite{dvorakWeightedSublinearSeparators2022} (improving on an earlier bound
by the same author~\cite{dvorakDomset2013}). Note that the above inequalities also hold if we replaced the ordered graph with an unordered graph.



\marginpar{\small bounded expansion}
For theoretical considerations, the three measures are often exchangable since if one is bounded, the other two are as well. The weak colouring number provides the most `structure' to work with and has been used to great success in the study of \emph{bounded expansion} classes (see \eg \cite{drangeDomsetKernel2016,amiriDistDomset2018,groheCoveringNowhereDense2018,reidlCharBoundExp2019}) as well as for designing algorithms~\cite{reidlCounting2023,dvorakDomset2013}. The original definition of bounded expansion classes involves the notion of shallow minors~\cite{Sparsity} which we will forgo here, since Zhu~\cite{zhuGeneralizedCols2009} showed that bounded expansion classes are exactly those classes in which the weak $r$-colouring number is bounded for \emph{any}~$r$, \ie there exists some function~$f$ such that for every member~$G$ of the class it holds that~$\wcol_r(G) \leq f(r)$. Note that we can equivalently use~$\adm_r$ or~$\scol_r$ in this definition since the measures bound each other.

However, both the weak and strong $2$-colouring numbers are \NP-hard to compute exactly~\cite{colHardness2023} and for that reason, studies on practical values of these measures~\cite{WcolExperimental,TurbochargeWcol} are somewhat inconclusive as large values on some networks might simply be a failure of the respective heuristic\footnote{Some approximation algorithms exist, but the ratios are not good enough to paint a clear picture}.
The same holds true for comparable measures like the maximum density of
graphs that appear as $1$-subdivisions~\cite{muziBeingEvenSlightly2017} which also bounds and is bounded by the $2$-admissibility. So for the purpose of finding an \emph{exact} sparseness measure that can be computed efficiently, only the $2$-admissibility is a suitable candidate.

We close this section by showing a few basic properties of admissibility and path-packings.

\begin{restatable}{lemma}{lemmaNumEdges}\label{lemma:NumberOfEdges}
	For every graph $G$ and integers $p, r \geq 1$,	if $\|G\| > p|G|$ then $\adm_r(G) > p$.
\end{restatable}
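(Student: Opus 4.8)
The plan is to prove the contrapositive in spirit: show that whenever a graph is sparse enough to admit an ordering witnessing $\adm_r \le p$, its edge count is bounded by $p|G|$. Concretely, I would take any ordering $\GG$ of $G$ with $\adm_r(\GG) = \adm_r(G) \le p$ (such an ordering exists by definition, if $\adm_r(G) \le p$), and charge each edge of $G$ to its right endpoint in this ordering.

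First I would observe that for $r \ge 1$ and any vertex $u$, the left neighbourhood $N^-(u)$ gives rise to a trivial $(r, L_u)$-path packing rooted at $u$: namely the family of single-edge paths $\{uv \mid v \in N^-(u)\}$. Each such path has length $1 \le r$, starts at $u$, ends in $L_u$, the paths are pairwise vertex-disjoint except at $u$, and since their only interior vertex is $u \notin L_u$, they avoid $L_u$. Hence $\pp^r_{L_u}(u) \ge |N^-(u)|$, and therefore $|N^-(u)| \le \pp^r_{L_u}(u) \le \adm_r(\GG) \le p$ for every $u$.

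Next I would sum this bound over all vertices: $\|G\| = \sum_{u \in V(\GG)} |N^-(u)| \le p |G|$, using that every edge is counted exactly once as a left-neighbour of its larger (rightmost) endpoint. This contradicts the assumption $\|G\| > p|G|$, so no such ordering can exist, which means $\adm_r(G) > p$. (Equivalently and more directly: assume $\adm_r(G) \le p$; then the argument above yields $\|G\| \le p|G|$, contradicting the hypothesis.)

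I do not expect a genuine obstacle here — the only point requiring a moment of care is the verification that the single-edge-path family is a legitimate $(r,X)$-path packing, in particular clause (4) that paths avoid $X = L_u$: a single edge $uv$ has no interior vertices at all, so the condition holds vacuously, and $v \in L_u$ is explicitly permitted as an endpoint. With that checked, the rest is the standard degeneracy-style counting argument, and the $r$-independence of the statement falls out because the same trivial packing works for every $r \ge 1$.
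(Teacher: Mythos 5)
Your proof is correct and follows essentially the same route as the paper's: both arguments charge each edge to its right endpoint in an ordering and use the fact that the left neighbourhood of a vertex~$u$ yields a trivial $(r,L_u)$-path packing, so $|N^-(u)| \le \pp^r_{L_u}(u)$. You merely phrase it as a contrapositive and spell out the details (in particular the verification of clause~(4) and the summation $\sum_u |N^-(u)| = \|G\|$) that the paper's one-line proof leaves implicit.
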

\begin{proof}
If $\|G\| > p|G|$, then for every order $\orderstrict$ of the graph $G$, there exists a vertex $v$ such that $|\Vleft_v| > p$ and hence $\pp^r_{\Vleft_v}(v)>p$.
\end{proof}

\begin{restatable}{lemma}{lemmaSmallMatching}\label{lemma:small-pp-small-matching}
	Let $G$ be a graph, $L\subseteq V(G)$, $v \in L$, and $\M_v$ be a matching between the vertices of $N_R(v)$ and of $T^2_L(v)$. 
	Then, $\pp^2_L(v) \geq |N_L(v)| + |M_v|$ and equality holds if $\M_v$ is a maximum matching. 
\end{restatable}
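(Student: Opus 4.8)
The plan is to get the inequality by exhibiting a concrete $(2,L)$-path packing built from $M_v$, and then the equality by taking an optimal packing and normalising it into the same shape.

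\textbf{Lower bound.}  Given an arbitrary matching $M_v$ between $N_R(v)$ and $T^2_L(v)$, I would take $\mathcal P$ to be the length-one paths $v u$ over all $u \in N_L(v)$, together with the length-two paths $v w x$ over all edges $wx \in M_v$ with $w \in N_R(v)$ and $x \in T^2_L(v)$.  Each $v w x$ is a genuine path of length $2$ (as $w \in N(v)$ and $wx$ is an edge), it starts at $v$, its endpoint lies in $T^2_L(v)$, and its only interior vertex $w$ lies in $R$, so the family avoids $L$.  The point needing care is vertex-disjointness apart from $v$: the endpoints in $N_L(v)$ are pairwise distinct, the vertices touched by $M_v$ are pairwise distinct since $M_v$ is a matching, the partition $L \cup R$ keeps the length-one endpoints away from the middle vertices $w$, and the definitional fact $T^2_L(v) \cap N(v) = \emptyset$ keeps the length-one endpoints away from the vertices $x$ and also stops any vertex from being simultaneously a middle vertex (which lies in $N(v)$) and an endpoint $x$ of another path.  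Then $\mathcal P$ is a valid $(2,L)$-path packing rooted at $v$, and since its $|N_L(v)| + |M_v|$ paths are distinct, $\pp^2_L(v) \geq |N_L(v)| + |M_v|$.

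\textbf{Equality when $M_v$ is maximum.}  I would take a maximum $(2,L)$-path packing $\mathcal P$ rooted at $v$ and first normalise it: while some length-two path $v w x \in \mathcal P$ has $x \in N(v)$ (hence $x \in N_L(v)$), replace it by the length-one path $v x$, which keeps $\mathcal P$ a valid packing of the same size; this terminates and afterwards no length-two path ends in $N(v)$.  Splitting $\mathcal P$ into its length-one part $\mathcal P_1$ and length-two part $\mathcal P_2$, every path of $\mathcal P_1$ is $v u$ with $u \in N_L(v)$ and these endpoints are distinct, so $|\mathcal P_1| \leq |N_L(v)|$; every path of $\mathcal P_2$ is $v w x$ with $w \notin L$, $w \in N(v)$ (so $w \in N_R(v)$), $x \notin N(v)$ and $w \in N_R(x) \cap N(v)$ (so $x \in T^2_L(v)$).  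Vertex-disjointness then forces $\{\, wx \mid v w x \in \mathcal P_2 \,\}$ to be a matching between $N_R(v)$ and $T^2_L(v)$, so $|\mathcal P_2| \leq |M_v|$, and hence $\pp^2_L(v) = |\mathcal P_1| + |\mathcal P_2| \leq |N_L(v)| + |M_v|$.  Together with the lower bound this is the claimed equality.

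\textbf{Main obstacle.}  The construction is routine; the work is in the normalisation step and in keeping the four vertex classes $N_L(v)$, $N_R(v)$, $T^2_L(v)$ and the middle vertices of the length-two paths pairwise disjoint in both directions of the argument.  All of these disjointness checks come down to the single definitional fact $N(v) \cap T^2_L(v) = \emptyset$, which is what makes length-one and length-two paths compatible and what turns the length-two paths of an optimal packing into an honest matching between $N_R(v)$ and $T^2_L(v)$.
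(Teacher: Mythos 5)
Your proof is correct and follows essentially the same route as the paper's: a direct construction of a $(2,L)$-path packing from the matching for the lower bound, and a decomposition of an optimal packing into length-one and length-two paths for the upper bound, bounding these by $|N_L(v)|$ and the maximum matching size respectively. Your explicit normalisation step (replacing length-two paths that end in $N(v)$ by length-one paths) handles a detail the paper glosses over when it asserts that every length-two path of an optimal packing ends in $T^2_L(v)$, so if anything your write-up is slightly more careful.
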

\begin{proof}
  Fix $L$ to be an arbitrary subset of $V(G)$ and $v$ an arbitrary vertex in $L$.
  Let $R = V(G)\setminus L$ and  $\M_v$ a matching between the vertices of $N_R(v)$ and of $T^2_L(v)$.
  
  We show first that  $\pp^2_L(v) \geq |N_L(v)| + |M_v|$.
  Afterwards we show that, if $M_v$ is a maximum matching, then $\pp^2_L(v) \leq |N_L(v)| + |M_v|$. 
  The lemma follows. 
  
  Let $\hat{\ell}$ be the number of edges of $\M_v$ and~$\{x_iy_i\}_{i \in [\hat{\ell}]}$ be the edges of $\M_v$ with~$x_i \in N_R(v)$ and~$y_i \in T^2_L(v)$. Construct the paths~$\mathcal P' \defeq \{ vu \mid u \in N_L(v) \} \cup \{vx_iy_i \mid i \in [\hat{\ell}]\}$. Since the sets $N_R(v)$, $N_L(v)$ and $T^2_L(v)$ are all pairwise disjoint and the paths~$vx_iy_i$ only intersect in~$v$, we conclude that~$\mathcal P'$ is a $(2,L)$-path packing of size $|M_v| + |N_L(v)|$. Thus, by definition, $\pp^2_L(v) \geq |M_v| + |N_L(v)|$.
  
  Suppose that $\M_v$ is a maximum matching.
  Let $\mathcal P$ be a $(2,L)$-path packing~ rooted at~$v$ of size $\pp^2_L(v)$.
  Let $\ell$ be the number of paths of $\mathcal P$ that have length one. Note that the remaining~$s \defeq |\mathcal P| - \ell$
  paths all have length two and take the form~$vx_iy_i$, $1 \leq i \leq s$, where~$x_i \in N_R(v)$ and~$y_i \in T^2_L(v)$. 
  By definition, all the paths of length one of $\mathcal P$ have an end point in $N_L(v)$ and hence $s \leq |N_L(v)|$.
  Also by definition, all the $x_i$ and $y_i$ are distinct, therefore the edges~$\{x_iy_i\}_{i \in [s]}$ are a matching between $N_R(v)$  and of $T^2_L(v)$. Thus, $s$ is at most the size of a maximum matching in $G$ between the vertices of $N_R(v)$ and of $T^2_L(v)$.
  Consequently, as $\pp^2_L(v) = s + \ell$, it holds that $\pp^2_L(v) \leq |N_L(v)| + |M_v|$.
\end{proof}

\begin{restatable}{lemma}{lemmaCandStaysCand}\label{lemma:Cand-stays-Cand}
		Let $G$ be a graph.
		For every $L' \subseteq V(G)$, $L\subseteq L'$ and $v \in L'$ it holds that $\pp^2_L(v) \leq \pp^2_{L'}(v)$. 
\end{restatable}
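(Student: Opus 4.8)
The statement is a monotonicity claim that looks deceptively simple but has a twist: enlarging $L$ to $L'$ makes more vertices available as end-points of packing paths, yet it simultaneously makes the ``avoid $L'$'' constraint stricter, so a $(2,L)$-path packing need not be a $(2,L')$-path packing verbatim. The plan is therefore to start from a maximum $(2,L)$-path packing $\mathcal P$ rooted at $v$ (so that $|\mathcal P| = \pp^2_L(v)$) and repair it into a $(2,L')$-path packing $\mathcal P'$ of the same size.

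First I would classify the paths of $\mathcal P$. A length-one path $vu \in \mathcal P$ has $u \in L \subseteq L'$ and no interior vertex, so it is already a legal path for a $(2,L')$-packing. A length-two path $vwu \in \mathcal P$ has end-point $u \in L \subseteq L'$, while its unique interior vertex $w$ satisfies only $w \notin L$; if moreover $w \notin L'$, the path is again legal. The one remaining case is $w \in L' \setminus L$, and here the key observation is that $w$ has now itself become a legitimate end-point, so we may simply truncate $vwu$ to the length-one path $vw$. Let $\mathcal P'$ be obtained from $\mathcal P$ by applying this truncation to every length-two path whose middle vertex lies in $L'$, and keeping all other paths unchanged.

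Then I would verify that $\mathcal P'$ is a $(2,L')$-path packing rooted at $v$: every path still starts at $v$, has length at most $2$, ends in a vertex of $L'$, and has interior disjoint from $L'$ (this is precisely the case analysis above). Vertex-disjointness outside $v$ is inherited from $\mathcal P$, because truncation only deletes vertices from paths and never introduces a new shared vertex. Since distinct paths of $\mathcal P$ cannot collapse to the same path under truncation (two length-two paths with the same middle vertex would already share a non-$v$ vertex), we conclude $|\mathcal P'| = |\mathcal P|$, hence $\pp^2_{L'}(v) \geq |\mathcal P'| = |\mathcal P| = \pp^2_L(v)$.

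The main obstacle — really the single idea the argument hinges on — is spotting the truncation move: a path that the stricter constraint would invalidate is invalidated exactly because its interior vertex has just become an allowed end-point, so the path can be shortened rather than discarded. Everything after that is a routine check of the four path-packing axioms.
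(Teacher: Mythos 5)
Your proof is correct. It reaches the same conclusion as the paper but by a more direct route: you repair the path packing itself, truncating each length-two path whose middle vertex has entered $L'$, whereas the paper first invokes Lemma~\ref{lemma:small-pp-small-matching} to rewrite $\pp^2_L(v)$ as $|N_L(v)|$ plus the size of a maximum matching between $N_R(v)$ and $T^2_L(v)$, and then observes that any matching edge $xy$ that becomes illegal under $L'$ does so precisely because $x$ has moved into $L'$, so $x$ is counted in $N_{L'}(v)$ instead. The two arguments encode exactly the same exchange --- a matching edge lost to $L'$ is your truncated path --- but yours is self-contained (it does not lean on Lemma~\ref{lemma:small-pp-small-matching}) and generalizes immediately to $r$-admissibility for $r>2$ by truncating each path at its first interior vertex lying in $L'$, while the paper's version has the advantage of being phrased in the matching vocabulary that the Oracle's data structures and the later correctness proofs actually use. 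Your observation that two truncated paths cannot coincide (they would already have shared a non-root vertex) closes the only subtle point in the size count.
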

\begin{proof}
  Let $L',L$ and $v$ be selected arbitrarily so that $L'\in V(G)$, $L\subset L'$ and $v \in L'$.
  Let $R = V(G)\setminus L$ and $R' = V(G)\setminus L'$.  
  
  Let $\mathcal P$ be a $(2,L)$-path packing rooted at~$v$ of size $\pp^2_L(v)$.
  By definition, $\mathcal P$ contains all the vertices in $N_L(v)$ and a maximum matching $\M$ of size $s$ between the vertices of $N_R(v)$ and of $T^2_L(v)$. 
  Let~$\{x_iy_i\}_{i \in [s]}$ be the edges of $\M$ with~$x_i \in N_R(v)$ and~$y_i \in T^2_L(v)$.
  
  Let $\M'$ be a matching that contains exactly every edge $x_iy_i$ of $\M$ such that $x_i \in N_{R'}(v)$ and $y_i \in N_{L'}^{2}(v)$.  
  We notice that, as $L \subseteq L'$, every vertex $y_i$ in an edge of $\M$ is also in $L'$.
  So, the only reason an edge $x_iy_i$ of $\M$ will not be in $\M'$ is if $x_i \in L'$ (and not in $R'$) which in turn means it is in $N_{L'}(v)$.
  Thus,  $N_{L'}(v)$ contains all the vertices $x_i$ such that $x_iy_i$ is an edge in $\M$ but not in $\M'$.
  $N_{L'}(v)$ also contains all the vertices of $N_L(v)$, since $L\subseteq L'$. 
  Consequently, (*) $|N_{L'}(v)| + \|\M'\| \geq |N_L(v)| + \|\M\|$.
  By Lemma~\ref{lemma:small-pp-small-matching}, $\pp^2_{L'}(v) \geq |N_{L'}(v)| + \|\M'\|$ and $\pp^2_L(v) = |N_L(v)| + \|\M\|$ and therefore, by (*), $\pp^2_{L'}(v) \geq \pp^2_{L}(v)$.
\end{proof}

\noindent

\section{The theoretical algorithm}
\marginnote{\Main}
We present here the Main Algorithm (\Main).
The input to the algorithm is a graph $G$ and a parameter $p$.
We assume that $\|G\| \leq pn$, since this can easily be checked and if it does not hold, then by Lemma~\ref{lemma:NumberOfEdges}, the $2$-admissibility number of $G$ is strictly larger than $p$. 

Let us for now assume that we have access to an Oracle that, given a subset $L$ of $V(G)$, can provide us with a vertex~$v \in L$
such that~$\pp^2_L(v) \leq p$ if such a vertex exists. With the help of this Oracle, the following greedy algorithm returns an ordering $\orderstrict$ such that $\adm_2(G,\orderstrict)\leq p$ if one exists and otherwise returns FALSE:

\begin{enumerate}
	\item Initialise $L \defeq V(G)$ and~$i = |G|$, as well as the Oracle.
	\item Find~$v \in L$ such that~$\pp^2_L(v) \leq p$ using the Oracle. If no such vertex exists return FALSE.
	\item Remove $v$ from~$L$, set $v_i$ to be $v$ and decrease $i$ by one. Go to the previous step unless~$L = \emptyset$.
	\item Return the order~$v_1,\ldots, v_{|G|}$.
\end{enumerate}

\noindent
The justification for why this greedy strategy works is a well-known property of $r$-admissibility, captured by the following lemma.

\begin{restatable}{lemma}{lemmaRadm}\label{lemma:r-admissibility}
	A graph $G$ has $r$-admissibility $p$, if and only if, for every non-empty $U
	\subseteq V$, there exists a vertex $u \in U$ such that $\pp^r_{V\setminus U}(u) \leq
	p$.
\end{restatable}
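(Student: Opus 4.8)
The plan is to prove both directions by relating an arbitrary linear ordering $\orderstrict_\GG$ of $G$ to the ``peeling'' sets $V \setminus U$. The key observation, which I would state explicitly at the start, is that if we list the vertices as $v_1, \ldots, v_n$ according to $\orderstrict_\GG$, then for each index $i$ the prefix set $\Vleft_{v_i}$ equals $V \setminus U_i$ where $U_i = \{v_i, v_{i+1}, \ldots, v_n\}$; conversely, every non-empty $U \subseteq V$ arises as such a suffix $U_i$ for some ordering whenever we are free to choose the ordering. With this dictionary, the condition ``$\pp^r_{\Vleft_{v_i}}(v_i) \leq p$ for all $i$'' is exactly ``$\pp^r_{V\setminus U}(u) \leq p$ for a suitably chosen $u \in U$, for every suffix $U$ of the ordering''.

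For the forward direction, suppose $\adm_r(G) \leq p$, so there is an ordering $\GG = (G, \orderstrict)$ with $\pp^r_{\Vleft_u}(u) \leq p$ for every $u$. Fix a non-empty $U \subseteq V$ and let $u = \min_{\orderstrict_\GG} U$ be its leftmost vertex. Then every vertex of $U$ other than $u$ lies to the right of $u$, so $\Vleft_u \subseteq V \setminus U$, i.e.\ $V \setminus U \subseteq \Vleft_u$ fails in general but $\Vleft_u \subseteq V\setminus U$ holds; more precisely $\Vleft_u = V \setminus U$ would require $U$ to be exactly a suffix, which need not hold, so instead I use monotonicity. Since $\Vleft_u \subseteq V\setminus U$, Lemma~\ref{lemma:Cand-stays-Cand} gives $\pp^r_{\Vleft_u}(u) \leq \pp^r_{V\setminus U}(u)$ --- wait, that inequality goes the wrong way; the correct application is: $\Vleft_u \subseteq V \setminus U$ is false in the needed direction, so I should instead note $V\setminus U \subseteq \Vleft_u$ is also false. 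The clean route: let $u$ be the $\orderstrict_\GG$-\emph{largest} vertex of $U$; then $V \setminus U \supseteq$ everything? No. I will simply take $u = \min_{\orderstrict_\GG} U$, observe $\Vleft_u \cap U = \emptyset$ hence $\Vleft_u \subseteq V \setminus U$, and apply Lemma~\ref{lemma:Cand-stays-Cand} with $L = \Vleft_u$ and $L' = V\setminus U$ to get $\pp^r_{\Vleft_u}(u) \geq \pp^r_{V\setminus U}(u)$ --- no: the lemma says $L \subseteq L'$ implies $\pp^2_L \leq \pp^2_{L'}$. So $\pp^r_{V\setminus U}(u) \geq \pp^r_{\Vleft_u}(u)$ is the wrong direction again. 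This sign issue is the main obstacle and must be handled carefully: the resolution is that I want $V \setminus U \subseteq \Vleft_u$, which I get by taking $u$ to be the \emph{rightmost} vertex of $U$ that still satisfies the peeling property; concretely, among the vertices whose removal from $U$ (one at a time, right to left) is witnessed by the ordering, the leftmost vertex of $U$ has $\Vleft_u \subseteq V\setminus U$ and I instead argue via the greedy peeling order directly rather than an arbitrary $U$.

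Given that subtlety, I would actually structure the proof around the peeling process itself. For $(\Rightarrow)$: take an optimal ordering $v_1, \ldots, v_n$; for any non-empty $U$, induct on $|U|$ using that $v_j := \min_{\orderstrict_\GG}\{v : v \in U\}$ has $\Vleft_{v_j} \supseteq V \setminus U$? The containment that is actually true is $\Vleft_{v_j} \subseteq V \setminus (U \cup \text{nothing})$; I'll verify the exact inclusion on scratch paper and pick $u$ accordingly, then apply Lemma~\ref{lemma:Cand-stays-Cand} in the direction it permits. For $(\Leftarrow)$: assume the peeling property holds for every non-empty $U$. Run the greedy algorithm from the paper: set $U = V$, repeatedly pick $u \in U$ with $\pp^r_{V\setminus U}(u) \leq p$ (exists by hypothesis), place it as the current-largest remaining vertex, remove it from $U$. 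This produces an ordering $v_1, \ldots, v_n$ in which $\Vleft_{v_i} = V \setminus U_i$ at the moment $v_i$ was chosen, where $U_i$ is the then-current $U$; hence $\pp^r_{\Vleft_{v_i}}(v_i) \leq p$ for all $i$, so $\adm_r(\GG) \leq p$, giving $\adm_r(G) \leq p$. The hypothesis's universal quantification over \emph{all} non-empty $U$ is exactly what makes the greedy choice always available regardless of earlier choices, so no backtracking is needed. I expect the only real care needed is the bookkeeping of which prefix/suffix corresponds to which set and getting the direction of the monotonicity inequality from Lemma~\ref{lemma:Cand-stays-Cand} right; everything else is a direct translation between the ``ordering'' and ``peeling'' formulations.
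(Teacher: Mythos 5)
Your backward direction is correct and is exactly the paper's argument: the greedy peeling places each chosen vertex so that its prefix $\Vleft_{v_i}$ equals $V\setminus U_i$ for the suffix $U_i$ current at the moment of choice, so the hypothesis applied to these sets bounds $\pp^r_{\Vleft_{v_i}}(v_i)$. The genuine gap is the forward direction, which you never complete: you oscillate between candidate choices of $u$ and directions of the monotonicity inequality, and end by deferring the decisive step to ``scratch paper''. The route you keep returning to --- take $u=\min_{\orderstrict_\GG}U$ and invoke Lemma~\ref{lemma:Cand-stays-Cand} --- cannot be repaired. That choice only yields $\Vleft_u\subseteq V\setminus U$, and monotonicity then gives $\pp^r_{\Vleft_u}(u)\le\pp^r_{V\setminus U}(u)$, a lower bound on the quantity you need to bound from above. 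No re-choice of $u$ fixes this for the expression $\pp^r_{V\setminus U}(u)$: when $U$ is not a suffix of the ordering, $V\setminus U$ contains vertices lying to the right of every element of $U$, and paths ending at such vertices are not controlled by the admissibility of the ordering. A star with centre $c$ and $U=\{c\}$ makes this concrete: $\adm_2=1$, yet the only vertex of $U$ has $n$ disjoint length-one paths into $V\setminus U$. The quantity that the ordering does control --- and the one the greedy step actually tests, and the one the paper's own forward argument estimates --- is $\pp^r_{U}(u)$, i.e.\ packings whose endpoints lie in the remaining set $U$ and whose interiors avoid $U$.

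The missing idea is the paper's: take $u$ to be the $\orderstrict_\GG$-\emph{maximum} element of $U$, so that $U\setminus\{u\}\subseteq \Vleft_u$, and suppose for contradiction that there is an $(r,U)$-path packing rooted at $u$ of size $p+1$. Replace each path by its initial segment up to the \emph{first} vertex of $\Vleft_u$ it meets; such a vertex exists because the path's endpoint lies in $U\setminus\{u\}\subseteq\Vleft_u$, the truncated paths still have length at most $r$, remain disjoint outside $u$, end in $\Vleft_u$, and now avoid $\Vleft_u$. This produces an $(r,\Vleft_u)$-path packing of size $p+1$, contradicting $\pp^r_{\Vleft_u}(u)\le p$. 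Note also that Lemma~\ref{lemma:Cand-stays-Cand} is stated (and proved, via matchings) only for $r=2$, so for general $r$ you need this truncation argument explicitly rather than a citation to that lemma.
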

\begin{proof}
  Suppose first that for every non-empty $U
  \subseteq V$, there exists a vertex $u \in U$ such that $\pp^r_{V\setminus U}(u) \leq
  p$.
  Then, an $r$-admissible order of $G$ can be found as follows, first 
  initialise the set $U$ to be equal to $V$ and $i$ to $|G|$, and then repeat the following two steps until $U$ is empty: 
  (1) find a vertex $u \in U$ such that $\pp^r_{V\setminus U}(u) \leq
   p$ removing it from $U$ and adding it in the $i$'th place of the order
  (2) decrease $i$ by $1$.
   
  We note that by construction, the $r$-admissibility of the order we got is at most $p$.
  Thus, $G$ has $r$-admissibility $p$.

  Suppose that $G$ has $r$-admissibility $p$. 
  Then, there exists an ordering, $\orderstrict$ of $V(G)$ such that $\adm_r(G, \orderstrict) \leq p$.
  Let $u_1, u_2, \ldots u_{|G|}$ be the vertices of $G$ and suppose
   that for every $i,j\in [|G|]$, such that $i<j$, it holds that $u_i \orderstrict u_j$.
  
  Let $u_k\in U$ be the vertex with the maximum index in $U$.
  We note that $U\subseteq \Vleft_{u_k}$.
  
  Assume for the sake of contradiction that $\pp^r_U(u_k) > p$. 
  Then, there exists an $(r,U)$-path packing $Pack$ rooted at $u_k$ of size $p+1$. 
  By definition, every path in $Pack$ starts in $u_k$ and ends in a vertex in
  $U$.

  Thus, since $U \subseteq \Vleft_{u_k}$, every path $P$ in $Pack$ 
  either avoids $\Vleft_{u_k}$, or includes a vertex from $\Vleft_{u_k}$.
  If the second case holds for such a path $P$, then it has sub-path that
  starts in $u_k$ and ends in a vertex in $\Vleft_{u_k}$ and avoids $\Vleft_{u_k}$. 
  Now, we can construct a new $(r,\Vleft_{u_k})$-path packing rooted at $u_k$ with the same size as $Pack$ ($p+1$), by iterating over all paths $P$ and taking them if the avoid $\Vleft_{u_k}$ and otherwise taking their subpath that avoid $\Vleft_{u_k}$.
  since $(G, \orderstrict)$ has $r$-admissibility at most $p$.
\end{proof}

\noindent

\subsection{The Oracle}

In the following, the set $L$ is the set of all the vertices that the Oracle has yet to return and the set $R$ is equal to $V(G)\setminus L$.

We note that, beyond providing the Oracle with the graph $G$ and parameter $p$, there is no need to update the Oracle since it has the knowledge of the vertex it returned.  To answer the queries efficiently, the Oracle maintains additional information for each vertex which enables it to answer queries in an amortized running time that is polynomial in~$p$.
Specifically, the Oracle maintains a set of vertices $\Candidates$ (short for `candidates'), and the following data structures for each vertex~$v \in L$:

\newcommand{\LMV}[1]{L^{\M}_{#1}}
\newcommand{\RMV}[1]{R^{\M}_{#1}}
\newcommand{\LMVP}[1]{L^{\M'}_{#1}}
\newcommand{\RMVP}[1]{R^{\M'}_{#1}}
\newcommand{\LMVS}[1]{L^{\M^*}_{#1}}
\newcommand{\RMVS}[1]{R^{\M^*}_{#1}}
\newcommand{\ME}[1]{E^{\M}_{#1}}

\medskip
\noindent\begin{tabular}{ll}
	$N_L(v)$, $N_R(v)$ & The sets $N_L(v)$ and $N_R(v)$ as defined above\\
	$\M_v$          	& A matching $(\LMV{v},\RMV{v},\ME{v})$ 
	                     with $\RMV{v} \subseteq N_R(v)$, 
	$\LMV{v}\subseteq T^2_L(v)$, and\\ 
	&	matching edges $\ME{v}$ \\
\end{tabular}

\medskip
\noindent
While technically~$N_L(v)$ and $N_R(v)$ are data structures distinct from the neighbourhood they represent, the Oracle will always update these data structures first to ensure consistency. For that reason, we avoid introducing additional notation and will simply equivocate the data structure and the neighbourhood it represents. 
The Oracle has three types of operations: \\

\noindent\textbf{Data structures initialisation.}
Given $G$ and $p$ the Oracle initialises $\Candidates$ to be the set of all vertices of degree at most $p$ in $G$ and, for every $v\in V(G)$, it sets $N_L(v) = N_G(v)$, $N_R(v) = \emptyset$, and $\M(v)$ to be the empty graph.

\smallskip

\noindent\textbf{Return vertex.} 
The Oracle returns a vertex from the set $\Candidates$ if it is not empty and otherwise it returns FALSE.

\smallskip

\noindent\textbf{Update data structures.}
The goal of the update is to ensure that the following conditions hold after a vertex is returned (removed from $L$). The conditions will be enforced by the algorithm~\ref{alg:Update} which runs just before the next vertex is returned:

\begin{enumerate}[(1)]
	\item $\Candidates$ contains all vertices $u\in L$ with $\pp^2_L(u) \leq p$,
	\item for every $w \in V(G)$, the data structures representing $N_L(w)$ and $N_R(w)$ contain the vertices they should by the definition of these sets, and
	\item for every $w \in L$, the matching $\M_w$ is a maximal matching between the vertices in $N_R(w)$ and the vertices in $T^2_L(w)$,
	and if $w\not\in \Candidates$ then also $|N_L(w)| + |\M_w| \geq p + 1$
\end{enumerate}

\noindent
The update will be perfomed by the Algorithm \Update, which in turn calls the Algorithm \Augmenting presented later. For the time being, it is sufficient to know that the following lemma (which we prove after presenting \Augmenting) holds:
\begin{restatable}{lemma}{lemmaRada}\label{lemma:Augmenting}
	Given a vertex $v$ as input,	
	\Augmenting returns FALSE if $\M_v$ is a maximum matching between~$N_R(v)$ and~$T^2_L(v)$. Otherwise, it returns a new maxi\emph{mal} matching between these sets with~$|\M_v|+1$ edges.
\end{restatable}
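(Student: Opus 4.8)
The intended proof rests on the classical augmenting-path characterisation of maximum matchings (Berge's theorem): a matching $M$ is maximum if and only if there is no $M$-augmenting path, i.e.\ no path whose edges alternate between non-matching and matching edges and whose two endpoints are both unmatched. The relevant graph here is the bipartite graph $\MG_v$ with parts $N_R(v)$ and $T^2_L(v)$ (these are disjoint, since $T^2_L(v) \cap N(v) = \emptyset$ while $N_R(v) \subseteq N(v)$) and with an edge joining $x \in N_R(v)$ to $y \in T^2_L(v)$ whenever $xy \in E(G)$ --- equivalently, whenever $vxy$ is a length-$2$ path that avoids $L$. The matchings in $\MG_v$ are precisely the matchings between $N_R(v)$ and $T^2_L(v)$ referred to in the statement, and by Lemma~\ref{lemma:small-pp-small-matching} a maximum such matching has size $\pp^2_L(v) - |N_L(v)|$.

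First I would read off from the pseudocode of \Augmenting that it performs an alternating search in $\MG_v$ starting from the $\M_v$-unmatched vertices on the $N_R(v)$ side, using the stored matching $\M_v$ and the stored neighbourhood partitions $N_R(\cdot), N_L(\cdot)$ to determine which pairs are $\MG_v$-edges and which vertices belong to $T^2_L(v)$; and that it reports an augmenting path exactly when the search reaches an $\M_v$-unmatched vertex on the $T^2_L(v)$ side. The standard invariant of such a search --- every marked vertex is reachable from an unmatched $N_R(v)$-vertex by an $\M_v$-alternating path of even length, and conversely every such vertex gets marked --- then shows that \Augmenting finds an augmenting path if and only if one exists. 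Since an augmenting path in a bipartite graph has odd length, its two endpoints lie in different parts, so starting the search only on the $N_R(v)$ side loses no generality.

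Combining this with Berge's theorem settles the two cases of the statement. If no augmenting path is found, $\M_v$ is a maximum matching of $\MG_v$ and \Augmenting returns FALSE, as claimed. If an augmenting path $P$ is found, \Augmenting returns $\M_v \triangle P$; flipping along an augmenting path increases the number of edges by exactly one, so the output has $|\M_v| + 1$ edges. For maximality I would argue: the set of unmatched vertices of $\M_v \triangle P$ is that of $\M_v$ with the two endpoints of $P$ removed, hence a subset of the unmatched set of $\M_v$; as $\M_v$ is maximal its unmatched set is independent in $\MG_v$, so the smaller unmatched set of $\M_v \triangle P$ is independent as well, i.e.\ $\M_v \triangle P$ is maximal. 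This step uses that $\M_v$ is maximal when \Augmenting is invoked, which is guaranteed by invariant~(3) that \Update maintains (all `addable' edges are inserted before \Augmenting is called); I would make this dependency explicit in the write-up.

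The main obstacle I anticipate is not the matching theory, which is routine, but the bookkeeping that ties the abstract graph $\MG_v$ to the Oracle's data structures: one must verify that the edges \Augmenting traverses are exactly the pairs $x \in N_R(v)$, $y \in T^2_L(v)$ with $xy \in E(G)$ --- in particular that membership in $T^2_L(v)$ is read off correctly from the stored partitions $N_L, N_R$ --- and that the `avoids $L$' condition is respected (which is automatic, since the interior vertex $x$ of any path $vxy$ under consideration lies in $N_R(v) \subseteq R$). A secondary point is that, because $\M_v$ is maximal, no augmenting path has length $1$, so the search effectively proceeds only through the at most $2|\M_v|$ matched vertices; this is what will later keep the running time bounded, although for the present correctness lemma only the equivalence ``augmenting path found $\iff$ $\M_v$ not maximum'' is required.
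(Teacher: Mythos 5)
Your proposal is correct and follows essentially the same route as the paper: both reduce the claim to the augmenting-path characterisation of maximum matchings in the auxiliary bipartite graph between $N_R(v)$ and $T^2_L(v)$, verify that the algorithm's directed search finds an augmenting path exactly when one exists, and derive maximality of the augmented matching from the maximality of $\M_v$ at the time of the call. The only presentational difference is that the paper's directed graph $\vec H_v$ is built on the matched vertices $V(\M_v)$ only, with the two unmatched endpoints attached afterwards via the sets $S$, $T$ and the map $\out$ --- precisely the data-structure bookkeeping you flag as the remaining work.
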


\begin{algorithm2e}[!hbt]
	\SetAlgoRefName{Update}
  \SetKw{Break}{break}
  \SetKw{Continue}{continue}
	\DontPrintSemicolon
	\KwIn{A vertex $v$ that was in
		 $\Candidates$}
	$\mathsf{Check} = \emptyset$\;
	\tcp{\codelabel{update:neighs}{1}Update neighbours in $N(v)$} 
	\For{$u\in N(v)$}{ 
		Remove~$v$ from~$N_L(u)$\;
		Add~$v$ to~$N_R(u)$\;
	}	
	\tcp{\codelabel{update:matching}{2}Ensure maximality of affected matchings}
	\For{$u\in N_L(v)$}{
		\For{$w\in N_L(v)$}{
			\If {$w \not\in \LMV{u} \cup N_L(u) \cup \{u\}$}{
				Add the edge $vw$ to $\M_u$\;
			 \Break\; 
			}
		}
		Add~$u$ to $\mathsf{Check}$\;
	}
	\tcp{\codelabel{update:neighs2}{3}Update relevant neighbours in $T^2_L(v)$}
	\For{$u\in \bigcup_{x \in \RMV{v}} N_L(x) \setminus \{v\}$}{
		\If {$v \not \in \LMV{u}$}{
			\Continue\;
		}
		Let $vz$ be the edge in $\M_u$ incident to $v$\;
		Remove $vz$ from $\M_u$\;
		\For{$y \in N_L(z)$}{
			\If{$y\not\in N_L(u) \cup \LMV{u} \cup \{u\}$}{
				Add the edge $yz$ to $\M_u$\;
			 \Break\;
			}
		}
		Add~$u$ to $\mathsf{Check}$\;
	}
	\tcp{\codelabel{update:check}{4}Check whether affected vertices become candidates}
	\For{$u \in \mathsf{Check}$}{
		\If{$|\M_u| + |N_L(u)| = p$ and $u\not\in \Candidates$}{
			\If{ \textnormal{\texttt{Augmenting}(u)} = FALSE}{
				Add $u$ to $\Candidates$\;
			}
		}
	}
	$\M_v := \emptyset$\;
	
	%

	\caption{\label{alg:Update}%
  \footnotesize
	Updates the Oracle data structures when moving a vertex~$v$ from~$L$ to~$R$.}
\end{algorithm2e}

\noindent
Let us now prove that the three conditions over the Oracle data structures hold after they are initialised and that they are maintained by \Update. This, by Lemma~\ref{lemma:r-admissibility}, already implies that \Main works correctly.

Afterwards we will bound the running time of \Update and the number of times that \Update executes  \Augmenting with the same input. Both lemmas are required in order to bound the running time of \Main. After that, we prove Lemma~\ref{lemma:Augmenting} and bound the overall running time of \Augmenting to then present the main theorem of this section.

\begin{lemma}\label{lemma:Update}
	The first three data structure conditions hold after the Oracle data structure initialisation and before and after every time \Update is executed.
\end{lemma}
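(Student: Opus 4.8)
The plan is to proceed by induction on the number of calls to \Update. First I would verify the base case: after initialisation we have $L = V(G)$, so $N_L(v) = N_G(v)$ and $N_R(v) = \emptyset$ for every $v$, which matches the definitions since $R = \emptyset$; also $T^2_L(v) = \emptyset$ for every $v$ (a length-two path avoiding $L = V(G)$ would need its internal vertex outside $L$, impossible), so the empty matching $\M_v$ is trivially a maximal matching between $N_R(v) = \emptyset$ and $T^2_L(v) = \emptyset$. Hence conditions (2) and (3) hold vacuously. (Condition (1) is excluded from this lemma's statement and handled separately.)

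For the inductive step, assume the three conditions hold with respect to the current set $L$, and let $v \in \Candidates \subseteq L$ be the vertex being moved to $R$; write $L' = L \setminus \{v\}$, $R' = R \cup \{v\}$. I would argue each condition in turn. For condition (2): the only vertices whose $N_L/N_R$ change are the neighbours $u \in N(v)$ — for each of these $v$ must move from $N_L(u)$ to $N_R(u)$ — and block~\coderef{update:neighs} does exactly this, so (2) holds afterwards. For condition (3), the substance of the proof, I need to track how $T^2_{L'}(w)$ differs from $T^2_L(w)$ for each $w \in L'$ and check that $\M_w$ is restored to a maximal matching between $N_{R'}(w)$ and $T^2_{L'}(w)$. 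The key observation is that $T^2$-sets only \emph{grow} when a vertex leaves $L$: moving $v$ to $R$ can create new length-two paths $w$-$v$-$y$ that avoid $L'$ (these did not avoid $L$ before, as $v \in L$), but destroys none, except that a vertex that was reached only through $v$ as an endpoint is unaffected since $v$ itself leaving $L$ does not remove $v$ from being a valid endpoint — rather, the subtlety is that $v$ may now \emph{be} an element of $N_{R'}(w)$ that was previously in $N_L(w)$, changing which side of the matching it sits on. I would split into: (a) vertices $w$ with $v \notin N(w)$ and $v$ not an internal vertex of any relevant path — these are untouched and their matching stays maximal; (b) vertices $u \in N_L(v)$ — here $v$ moves into $N_{R'}(u)$, so $v$ becomes a new candidate for the right side of $\M_u$, and a new $T^2$-element $w$ may appear for each common neighbour; block~\coderef{update:matching} greedily tries to match $v$ to some $w \in N_L(v) \setminus (\LMV{u} \cup N_L(u) \cup \{u\})$, i.e. a genuine new $T^2_{L'}(u)$-vertex not already matched or already a neighbour; (c) vertices $u$ for which $v$ was an \emph{internal} vertex of the matching path, i.e. $v \in \LMV{u}$ and $x v$ with $x \in \RMV{v}$ — here the path $u$-$x$-$v$ no longer avoids $L'$ in the required way once $v \in R'$... actually $v \in R'$ means $v$ could still be a valid $T^2$ element, so I need to re-examine: the issue is $v$ leaving $L$ means $v \notin T^2_{L'}(u)$ is \emph{false}; rather the relevant change is captured by iterating over $u \in \bigcup_{x \in \RMV{v}} N_L(x) \setminus \{v\}$ in block~\coderef{update:neighs2}, removing the now-stale edge $vz$ from $\M_u$ and re-matching $z$ if possible. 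In each case I must show (i) no edge of the updated $\M_w$ is invalid (endpoints genuinely in $N_{R'}(w)$ and $T^2_{L'}(w)$), and (ii) no augmenting edge of length zero was missed, i.e. maximality — this follows because each block, upon removing a stale edge, immediately performs a greedy search for a replacement partner among exactly the vertices that could extend the matching. Finally, the $\Candidates$ membership clause `if $w \notin \Candidates$ then $|N_L(w)| + |\M_w| \geq p+1$' is maintained because block~\coderef{update:check} revisits every $w$ whose $N_L$ or $\M$ shrank (collected in $\mathsf{Check}$) and, whenever the sum has dropped to exactly $p$, invokes \Augmenting to test whether $\pp^2_{L'}(w) \leq p$ via Lemma~\ref{lemma:Augmenting} and Lemma~\ref{lemma:small-pp-small-matching}, adding $w$ to $\Candidates$ precisely when it is now a legitimate candidate; any $w$ not placed in $\mathsf{Check}$ had no decrease, so its invariant is inherited.

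The main obstacle I anticipate is the bookkeeping in case (c) / block~\coderef{update:neighs2}: correctly characterising which vertices $u$ have $v$ as an internal vertex of their matching path, showing the iteration set $\bigcup_{x \in \RMV{v}} N_L(x) \setminus \{v\}$ captures exactly these (and that the guard `if $v \notin \LMV{u}$ continue' filters the rest), and then verifying that after deleting $vz$ and greedily re-matching $z$, the matching is maximal — the danger is that removing $vz$ could in principle free up \emph{two} augmenting opportunities, or that $z$ itself has now left $T^2_{L'}(u)$. I would handle this by a careful case analysis showing that $z \in N_R(v)$ implies $z$ remains in $T^2_{L'}(u)$ as long as $z \in N(x)$ for some $x \in N_{L'}(u)$, and that the only edge of $\M_u$ that can become stale is the one incident to $v$, so a single local repair suffices; combined with the greedy re-match this restores maximality. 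A secondary subtlety is ensuring that conditions (2) for \emph{all} vertices (including those in $R$) and (3) for vertices in $L'$ do not interact badly — but since (2) is established first and (3) only reads $N_L/N_R$ values that are already corrected, the order of operations in \Update resolves this.
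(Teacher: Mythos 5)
Your overall plan is the same as the paper's: induct over calls to \Update, split the vertices $u\in L$ according to whether $v\in N_L(u)$, $v\in V(\M_u)$, or neither, and match each case to one block of the algorithm. However, there are two concrete gaps. First, in your case (c) you talk yourself into the wrong conclusion: you assert that ``$v \notin T^2_{L'}(u)$ is \emph{false}'', i.e.\ that $v$ remains a valid endpoint, and then never resolve why the edge $vz$ is ``stale''. It must be removed precisely because the paths counted by $\pp^2_{L'}(u)$ end in $L'$, and $v$ has just left $L'$; so $v$ is no longer an admissible endpoint of a packing path and the matching edge $vz$ no longer witnesses anything. Without settling this, the whole justification of block~\coderef{update:neighs2} (and hence of Condition~(3)) is up in the air. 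Relatedly, your case (a) (``untouched vertices keep a maximal matching'') is asserted rather than proved: one must argue that for $u$ with $v\notin V(\M_u)\cup N_L(u)$ no new augmenting edge can appear, which the paper does by showing any new candidate edge $ab$ would force $a=v$ or would exhibit a pre-existing $2$-path, contradicting the old maximality.

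Second, and more seriously, you never justify why the test $|\M_u| + |N_L(u)| = p$ in block~\coderef{update:check} suffices. A priori the quantity $|\M_u|+|N_L(u)|$ could drop from $p+1$ to $p-1$ in a single call (Loop~\coderef{update:neighs} removes $v$ from $N_L(u)$ \emph{and} Loop~\coderef{update:neighs2} could delete a matching edge), in which case $u$ would sail past the threshold, never enter $\Candidates$, and Condition~(1) would fail. The paper closes this with an explicit claim that the sum decreases by at most one per update (because the two kinds of decrease are mutually exclusive for a given $u$, and each block that removes something immediately attempts a compensating insertion); this is the load-bearing step for both the second half of Condition~(3) and for Condition~(1), and it is absent from your proposal. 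Finally, note that Condition~(1) is \emph{not} excluded from the lemma --- ``the first three data structure conditions'' includes the requirement that $\Candidates$ contain every $u\in L$ with $\pp^2_L(u)\le p$ --- so the closing argument (monotonicity of $\pp^2$ via Lemma~\ref{lemma:Cand-stays-Cand}, the at-most-one-decrease claim, and the fact that \Augmenting returns FALSE exactly when the matching is already maximum) needs to be spelled out rather than deferred.
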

\begin{proof}
	Prior to the first time \Update is executed we have that $L = V(G)$. At this stage, for every $v\in V(G)$ the matching $\M_v$ must be empty and~$\pp^2_L(v)$ is equal to the degree of~$v$.
	Therefore $\Candidates$ must contain every vertex that has degree at most~$p$. This is precisely how we initialise the relevant data structures and we conclude that conditions (1)-(3) are met.

	Let us now deal with maintaining Condition~(2). The loop~\coderef{update:neighs} in \Update updates the data structures 
	$N_R$, $N_L$ for all vertices in~$N(v)$, where~$v$ is the vertex that was just moved from~$L$ to~$R$. It is easy to see that these are the only vertices for which the values of~$N_R$ and~$N_L$ change, hence we conclude that Condition~(2) is maintained by \Update. Moreover, we can assume that it holds when the parts~\Mark{2}--\Mark{4} of \Update are executed.

	Let us now deal with the Conditions~(1) and~(3). In order to talk about the state of various data structures in the algorithm, we will use left superscript annotations~$\Marki$, $i \in \{1,2,3,4\}$, to denote the state of a data structure \emph{after} the execution of Loop~\Mark{i}. Sets and data structures without annotation refer to their state before the execution of \Update. To avoid very convoluted
	notation, we write \eg $\Marki N_L$ instead of~$\Marki N_{\Marki L}$.
	Note that with this notation, $v \in L$ and~$v \not\in R$ but~$v \not \in {\One L}$ and~$v \in {\One R}$.	Also note that the sets~$\Marki L$ and $\Marki R$ remain the same during \Update. 

	Before we proceed, observe that all~$\Marki \M_v$ are the same: Loop~\Mark{1} does not change matchings and Loop~\Mark{2} does not change~$\M_v$, because~$v \not \in N_L(v)$. Loop~\Mark{3} enforces that~$u \neq v$, therefore the matching~$\M_v$ is not changed. For these reasons, $v$ is not added to $\mathsf{Check}$ and therefore Loop~\Mark{4} does not modify~$\M_v$.
		
	Let us now show that that Condition~(3) is upheld. To that end, let us first establish which matchings should \emph{not} be changed by \Update:

	\begin{claim}
		Let~$u \in L \setminus \{v\}$ and suppose~$v \not \in V(\M_u) \cup N_L(u)$. Then~$\M_u$ is a maximal matching between~$\Four N_R(u)$ and~$\Four T^2_L(u)$.
	\end{claim}
	\begin{proof}
		First note that since~$v \not \in V(\M_u)$ all edges of~$\M_u$ are indeed between~$\Four N_R(u)$ and~$\Four T^2_L(u)$.

		Assume now towards a contradiction that the edge~$ab$, $a \in \Four N_R(u)$
		and~$b \in \Four T^2_L(u)$, can be added to~$\M_u$ to obtain a larger matching. Since~$\M_u$ was, by Condition~(3), a maximal matching between $N_R(u)$ and~$T^2_L(u)$ it follows that either~$a \not \in N_R(u)$ or~$b \not \in T^2_L(u)$. In the first case, we conclude that~$a = v$ since it is the only vertex that changed from~$L$ to~$\Four R$. But then~$v \in N_L(u)$, contradicting the conditions of the claim. Consider therefore the case that
		$a \neq v$, so in particular we have that $a \in N_R(u)$, and $b \not \in T^2_L(u)$. Since vertices cannot move from~$R$ to~$L$ and~$b \not \in N(u)$, the only reason for~$b \not \in T^2_L(u)$ is that the only $2$-path from~$u$ to~$b$ is ~$uvb$. However, our construction also shows the existence of the path~$uab$ with~$a \neq v$, a contradiction.
	\end{proof}

	\noindent
	As a corollary to this claim, we find that a matching~$\M_u$ only needs to be updated if either~$v \in N_L(u)$ or~$v \in V(\M_u)$. Since~$\M_u$ does not contain any vertices from~$N_L(u)$, these two cases are mutually exclusive.

	Let us first verify that for any~$u \in L$ such that~$v \not \in V(\M_u) \cup N_L(u)$ it indeed holds that~$\Four \M_u = \M_u$: Loop~\Mark{1} does not modify any matchings. Loop~\Mark{2} does not modify $\M_u$ since by assumption $v \not \in N_L(u)$ and thus~$u \not \in N_L(v)$, in particular $u$ will not be added to $\mathsf{Check}$. Loop~\Mark{3} will not modify $\M_u$ since~$v \not \in V(\M_u)$, so if the loop considers~$u$, then it continues in line~13 without modifying~$\M_u$ and $u$ is not added to $\mathsf{Check}$. Since no loop so far added~$u$ to $\mathsf{Check}$, we can conclude that Loop~\Mark{4} never considers~$u$ and therefore does not modify~$\M_u$. We conclude that $\Four \M_u = \M_u$, as desired.

	Consider now a vertex~$u \in L$ with~$v \in N_L(u)$.
	Let us first verify that~$\Two \M_u$ is a maximal matching between~$\Two N_R(u)$ and~$\Two T^2_L(u)$. Loop~\Mark{1} does not modify the matching. Since~$v \in N_L(u)$, Loop~\Mark{2} is entered. If there exists an edge~$vw$ with $w \in \One T^2_L(u)$ and~$w \not \in V(\One \M_u)$, then \Update must add $vw$ to~$\One \M_u$ to preserve maximality. Lines~6 and~7 locate such a vertex~$w$ if it exists since necessarily~$w \in N_L(v) = \One N_L(v)$. As no further edge incident to~$v$ could then be added, \Update can forgo further checks and breaks the inner loop in line~9. By maximality of~$\M_u$, no further edge could possibly be added to the matching and we conclude that~$\Two \M_u$ is indeed maximal between~$\Two N_R(u)$ and~$\Two T^2_L(u)$ and therefore also maximal between~$\Three N_R(u)$ and~$\Three T^2_L(u)$.
 
	Consider now a vertex~$u \in L$ with~$v \in \M_u$. Since~$v \in L$, this is only possible if~$u \in T^2_L(v)$. Since the Oracle does not store this set, we need to justify line~11 by showing that $u \in \bigcup_{x \in R^\M_v} N_L(x)$.

	\begin{claim}
		Assume~$v \in V(\M_u)$. Then there exists a vertex~$x \in R^\M_v$ such that
		$u \in N_L(x)$.
	\end{claim}
	\begin{proof}
		Since~$v \in L$ and $v \in V(\M_u)$ we conclude that~$v \in T^2_L(u)$. This of course also means that~$u \in T^2_L(v)$, so clearly there exist at least one vertex~$x \in N_R(v)$ with~$u \in N_L(x)$. If all such vertices were not contained in~$x \in R^\M_v$, then clearly also~$u \not \in L^\M_v$ and we could add the edge~$ux$ to~$\M_v$, contradicting maximality.
	\end{proof}

	\noindent
	We conclude that Loop~\Mark{3} will indeed consider~$u$ and, by assumption, reach line~14 in this iteration. Note that Loop~\Mark{2} did not modify~$\One \M_u$ and we therefore have that~$\Two \M_u = \One \M_u = \M_u$.
	Let~$vz$ be the edge incident to~$v$ in~$\M_u$. By maximality and since~$v \in \M_u$, no edge between $\Two N_R(u)$ and~$\Two T^2_L(u)$ can be added to~$\M_u$. Therefore, the only way in which~$\M_u' \defeq \M_u \setminus \{vz\}$ is \emph{not} maximal in $\Two N_R(u) \cup \Two T^2_L(u)$ is if there exists some vertex~$y$ so that~$yz$ can be added to~$\M_u'$. Since~$z \in R$, $y$ must be in~$L$ and therefore~$y \in N_L(z)$. The edge~$yz$ can be added to~$\M_u'$ if~$y \not \in L \cap \M_u'$ and $y \not \in N_L(u) \cup \{u\}$. Note that~$L \cap \M_u' = L \cap \Two \M_u = L \cap \M_u$, therefore lines~16 and~17 select~$y$ correctly and add such an edge~$yz$ to~$\M'_u$ if possible. If such an edge is added, the vertex~$z$ is now matched again and by our previous observation $\Three \M_u = \M_u' \cup \{yz\}$ must be maximal. Otherwise, $\Three \M_u = \M_u$ is proven to be maximal between~$\Two N_R(u) = \Three N_R(u)$ and~$\Two T^2_L(u) = \Three T^2_L(u)$.

	We have shown that all matchings~$\M_u$, $u \in L$, are maximal when we reach Loop~\Mark{4}. By Lemma~\ref{lemma:Augmenting} the algorithm \Augmenting preserves maximality and therefore all matchings~$\M_u$ for~$u \in \Four L$ are maximal between~$\Four N_R(u)$ and~$\Four T^2_L(u)$. Let us now address the second part of Condition~(3), namely that if~$u \not \in \Four \Candidates$, then we must prove that~$|\Four N_L(u)| + |\Four \M_u| \geq p + 1$.

	Consider such a vertex. Since~$u \not \in \Four \Candidates$ we have that~$u \not \in \Candidates$, hence~$|N_L(u)| + |\M_u| \geq p + 1$ by Condition~(3). If~$u \not \in \Three\mathsf{Check}$, then~$\Three N_L(u) = N_L(u)$
	and~$\Three \M_u = \M_u$ and hence
  \vspace*{-5pt}
	\[
		|\Four N_L(u)| + |\Four \M_u| = |\Three N_L(u)| + |\Three \M_u| \geq p + 1.
	\]
	Assume now that~$u \in \Three\mathsf{Check}$, which means that line~23 is executed for~$u$ and accordingly~$|\Three N_L(u)| + |\Three \M_u| = p$.
	In order to justify this line, we need the following observation:

	\begin{claim}\label{claim:down-by-one}
		$|\Three N_L(u)| + |\Three \M_u| \geq |N_L(u)| + |\M_u| - 1 $
	\end{claim}
	\begin{proof}
		Recall that~$\One N_L(u) = \Three N_L(u)$. First consider the case that~$\Three N_L(u) \neq N_L(u)$, which is only the case if Loop~\Mark{1} removed~$v$ from~$N_L(u)$ to construct~$\One N_L(u)$ and hence~$|\Three N_L(u)| = |N_L(u)|-1$. As we have established above, Loop~\Mark{3} is not executed for~$u$ and therefore~$\Three \M_u = \Two \M_u$. Since Loop~$\Mark{2}$ adds at most one edge to~$\One \M_u$, we have that~$|\Two \M_u| \geq |\One \M_u| = |\M_u|$. We conclude that in this case the claim holds.

		Consider now the case that~$\Three N_L(u) = N_L(u)$. Then~$\Two \M_u = \M_u$
		since Loops~\Mark{1} and~\Mark{2} are not executed for~$u$. Loop~\Mark{3} constructs $\Three \M_u$ from~$\Two \M_u$ by removing at most one edge (and potentially adding one), therefore $|\Three \M_u| \geq |\M_u| - 1$ and the claim holds.
	\end{proof}

	\noindent
	Since~$|N_L(u)| + |\M_u| \geq p + 1$, we have that $|\Three N_L(u)| + |\Three \M_u| \geq p$. First consider the case~$|\Three N_L(u)| + |\Three \M_u| \geq p + 1$.
	Then lines~23 and~24 are not executed and therefore~$u \not \in \Four \Candidates$
	with~$|\Four N_L(u)| + |\Four \M_u| = |\Three N_L(u)| + |\Three \M_u| \geq p + 1$. Otherwise, $|\Three N_L(u)| + |\Three \M_u| = p$ and line~23 is executed. Since we assumed that~$u \not \in \Four \Candidates$, line~24 was not executed, therefore \Augmenting(u) returned TRUE. By Lemma~\ref{lemma:Augmenting}, this means that~$|\Four \M_u| = |\Three \M_u| + 1$
	and we conclude that~$|\Four N_L(u)| + |\Four \M_u| \geq p + 1$.
	We conclude that Condition~(3) holds after each call to \Update.

	Finally let us prove that \Update maintains Condition~(1), \ie we show that each vertex	$u \in \Four L$ with~$\pp^2_{\Four L}(u) \leq p$ is contained in $\Four \Candidates$. Since~$v \not \in \Four L$ we can assume~$u \neq v$ in the sequel. Note that~$\Four L \subset L$, so by applying Lemma~\ref{lemma:Cand-stays-Cand} we have that~$\pp^2_{\Four L}(u) \leq \pp^2_L(u)$.
	Hence if~$\pp^2_{L}(u) \leq p$, then~$u \in \Candidates$ and therefore
	$u \in \Four \Candidates$. Consider therefore a vertex~$u$ with~$\pp^2_L(u) > p$ and~$\pp^2_{\Four L}(u) \leq p$, in particular this means that $u \not \in \Candidates$. By Condition~(3) then $|N_L(u)| + |\M_u| \geq p + 1$. By Lemma~\ref{lemma:small-pp-small-matching}, $|\Three N_L(u)| + |\Three \M_u| \leq \pp^2_{\Three L}(u) = \pp^2_{\Four L} \leq p$.  Therefore~$N_L(u) \neq \Three N_L(u)$ or~$\M_u \neq \Three \M_u$ and either Loop~\Mark{2} or Loop~\Mark{3} added~$u$ to~$\mathsf{Check}$, therefore \Update will execute line~22 for~$u$. By Claim~\ref{claim:down-by-one}, we have that~$|\Three N_L(u)| + |\Three \M_u|
	\geq |N_L(u)| + |\M_u| - 1 = p$ and we can conclude that $|\Three N_L(u)| + |\Three \M_u| = p$. Therefore, the conditions on line~22 are met and line~23 is executed for~$u$. The call to \Augmenting(u) must return FALSE as otherwise $|\Four \M_u| = |\Three \M_u| + 1$ which would lead to~$\pp^2_{\Four L}(u) = p +1$, contradicting our assumptions about~$u$. Consequently, line~24 is executed for~$u$ and we conclude that~$u \in \Four \Candidates$. In summary, all vertices $u \in \Four L$ with~$\pp^2_{\Four L}(u) \leq p$ are indeed contained in $\Four \Candidates$ and Condition~(1) holds after each call to \Update.
\end{proof}

\newcommand{\out}{\mathsf{out}}
\begin{algorithm2e}[!htb]
	\SetAlgoRefName{Augmenting}
	\DontPrintSemicolon
	\KwIn{A vertex $v$}
	\KwOut{\textbf{TRUE} if the algorithm added one more edge to $\M_v$, \textbf{FALSE} otherwise.}
	\tcp{Construct auxiliary matching graph}
	Create an empty graph $\vec H_v$ with vertices $V(\M_v)$\;
	$S:= \emptyset$, $T:= \emptyset$, $\out \defeq \emptyset$\;
	\For{$u\in N_R(v)$}{
		\For{$w \in \LMV{v}$}{
			\If{$uw\in E(\M_v)$}{
				Add directed edge $uw$ to $E(\vec H_v)$\;
			}
			\ElseIf{$uw \in E(G)$}{
				\If{$u\in \RMV{v}$}{
					Add directed edge $wu$ to $E(\vec H_v)$\;
				}
				\Else{
					Add $w$ to $T$\;
					$\out(w) = u$\;
				}
			}
		}
	}	
	\For{$u\in \RMV{v}$}{
		\For{$w\in N_L(u)$}{
			\If{$w\notin \LMV{v} \cup N_L(v) \cup \{v\}$}{
					Add $u$ to $S$\;
					$\out(u) = w$\;
			}
		}
	}
	\tcp{Augment matching if possible}
	\If {there is a directed path $\vec P$ from $S$ to $T$ in~$\vec H_v$}{
		Let~$s \in S$, $t \in T$ be the endpoints of~$\vec P$\;
		$E^+ \defeq E(\vec P) \setminus \M_v \cup \{ s\out(s), t\out(t) \}$\;
		$E^- \defeq E(\vec P) \cap \M_v$\;
		Remove~$E^-$ from~$\M_v$ and add~$E^+$ to~$\M_v$\;
		\Return TRUE\;
	}
	\Else{
		\tcp{Includes the case when one of $S,T$ is empty}
		\Return FALSE\;
	}
	
	\caption{\label{alg:Augmenting}%
		Attempts to increase the local matching $\M_v$ for a given vertex~$v$.}
\end{algorithm2e}

\begin{lemma}\label{lemma:UpdateRT}
	The total running time of \Update is $O(|G|  p^{3})$ plus the total running time of \Augmenting. 
\end{lemma}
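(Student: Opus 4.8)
The plan is to bound the cost of a single execution of \Update, for a vertex~$v$ moved from~$L$ to~$R$, by $O(\deg v + p^3)$ plus the time spent inside the \Augmenting calls it makes, and then to sum over the at most~$|G|$ executions (one before each returned vertex). The~$\deg v$ terms sum to~$O(\|G\|) = O(p|G|)$ by the standing assumption~$\|G\| \le p|G|$, so the total becomes~$O(p^3|G|)$ plus the total running time of \Augmenting.

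Two size estimates carry the analysis. First, every vertex~$u$ that is in~$\Candidates$ satisfies~$|N_L(u)| + |\M_u| \le p$: this holds when~$u$ enters~$\Candidates$ (an initial candidate has~$\deg u \le p$ and~$\M_u = \emptyset$, and a vertex is added to~$\Candidates$ later only inside Loop~\coderef{update:check}, whose guard forces~$|\M_u| + |N_L(u)| = p$ while \Augmenting returns FALSE, leaving this value unchanged), and inspecting Loops~\coderef{update:neighs},~\coderef{update:matching}, and~\coderef{update:neighs2} shows that none of them increases~$|N_L(u)| + |\M_u|$ for a candidate~$u$ (in Loop~\coderef{update:matching} a unit increase of~$|\M_u|$ is offset by the removal of~$v$ from~$N_L(u)$ in Loop~\coderef{update:neighs}), while Loop~\coderef{update:check} ignores vertices already in~$\Candidates$. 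In particular the moved vertex~$v$ has~$|N_L(v)| \le p$ and~$|\RMV{v}| \le |\M_v| \le p$. Second, every~$x \in R$ has~$|N_L(x)| \le p$, since when~$x$ was moved to~$R$ it was itself a candidate and~$N_L(x)$ only loses vertices afterwards. I will also use that, with the data structures the Oracle maintains, membership queries in~$N_L(w)$,~$N_R(w)$,~$\LMV{w}$ and~$\RMV{w}$, insertion and deletion of a matching edge, and looking up the partner of a matched vertex each cost~$O(1)$.

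The bound then follows loop by loop. Loop~\coderef{update:neighs} does~$O(1)$ work per vertex of~$N(v)$, hence~$O(\deg v)$. Loop~\coderef{update:matching} runs over the~$\le p$ vertices of~$N_L(v)$ in a nested fashion with~$O(1)$ per pair, hence~$O(p^2)$. Loop~\coderef{update:neighs2} iterates over~$\bigcup_{x \in \RMV{v}} N_L(x) \setminus \{v\}$; as~$|\RMV{v}| \le p$ and every~$x \in \RMV{v} \subseteq R$ has~$|N_L(x)| \le p$, this set has at most~$p^2$ elements, for each of which the guard~$v \in \LMV{u}$ is an~$O(1)$ test, and each~$u$ passing it additionally costs~$O(1)$ to retrieve and delete the edge~$vz$ together with~$O(|N_L(z)|) = O(p)$ to rescan~$N_L(z)$ (note~$z \in \RMV{u} \subseteq R$); so Loop~\coderef{update:neighs2} costs~$O(p^3)$. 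Loop~\coderef{update:check} runs over~$\mathsf{Check}$, which receives at most~$|N_L(v)| \le p$ vertices from Loop~\coderef{update:matching} and at most one vertex per iteration of Loop~\coderef{update:neighs2}, hence at most~$p + p^2$ vertices, and each iteration costs~$O(1)$ besides the possible call to \Augmenting; this contributes~$O(p^2)$ plus the cost of those calls. Finally,~$\M_v := \emptyset$ costs~$O(|\M_v|) = O(p)$. Summing~$O(\deg v + p^3)$ over all executions and adding the \Augmenting calls, which are accounted separately, gives~$O(p^3|G|)$ plus the total running time of \Augmenting.

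I expect the only genuinely delicate point to be the derivation of~$|\RMV{v}| \le p$ for the moved vertex and of~$|N_L(x)| \le p$ for every~$x \in R$, since these are precisely the bounds that prevent the loop over~$\bigcup_{x\in\RMV{v}} N_L(x)$ and the rescans of~$N_L(z)$ from degenerating to something of order~$\sum_x (\deg x)^2$. The remainder is a routine count of constant-time operations.
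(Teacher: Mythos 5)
Your proof is correct and follows essentially the same route as the paper: a loop-by-loop count driven by the bounds $|N_L(v)|, |\RMV{v}| \le p$ for the returned candidate and $|N_L(x)| \le p$ for every $x \in R$, summed over the $|G|$ calls with the degree terms absorbed into $O(\|G\|) = O(p|G|)$. The only cosmetic difference is that you derive $|N_L(u)| + |\M_u| \le p$ for candidates by a direct invariant on how vertices enter $\Candidates$, whereas the paper obtains it from Condition~(1) together with Lemma~\ref{lemma:small-pp-small-matching}; both yield the same bounds.
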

\begin{proof}
	We assume that all basic data structures are implemented using hashing, and therefore all operations excluding loops take expected constant time.
	Thus, to bound the total running time we only need to bound the number of times every loop is executed.
		
	Loop~\Mark{1} in \Update is over vertices $N(v)$. Since every vertex of~$G$
	takes the role of~$v$ precisely once, this contributes~$O(\|G\|) = O(p|G|)$ to the total running time.

	Loop~\Mark{2} iterates over the vertices of $N_L(v)$.
	Since $v\in \Candidates$, and Condition~(1) for the data structures holds by Lemma~\ref{lemma:Update}, we know that $\pp_L(v) \leq p$ and in particular 
	(by Lemma~\ref{lemma:small-pp-small-matching}) $|N_L(v)| \leq p$. Consequently the running time of the Loop~\Mark{2} is $O(p^2)$.
	
	Loop~\Mark{3} iterates over $\bigcup_{x \in \RMV{v}} N_L(x)$ and it has an internal loop over the vertices of $N_L(z)$.
	Since $\pp_L(v) \leq p$, by Lemma~\ref{lemma:small-pp-small-matching}, $|\RMV{v}| \leq p$. 
	As every vertex in $\RMV{v}$ is in $R$ and therefore was at some earlier point in $\Candidates$ we have, by the same reasoning as for $v$, that $|N_L(x)|\leq p$, for every $x\in \RMV{v}$, and therefore $|\bigcup_{x \in \RMV{v}} N_L(x)| \leq p^2$.
	We note that prior to the removal of $vz$ from $\M_u$ it held that $v\in \LMV{u}$ and hence $z\in \RMV{u}$. So $z \in R$ which again means that $|N_L(z)|\leq p$ for every $z$ using during the loop.
	Consequently, the total running time of Loop~\Mark{3} is $O(p^3)$.
	
	Finally, Loop~$\Mark{4}$ iterates over all the vertices in $\mathsf{Check}$.
	Since Loop~\Mark{2} and~\Mark{3} add at most one vertex per iteration to~$\mathsf{Check}$, we have that $|\mathsf{Check}| = O(p^3)$ and therefore we have that many iterations of Loop~\Mark{4}.
	
	Finally, since in a full execution of \Main, \Update is called exactly once for every vertex in $G$ the overall running time of \Update is  $O(|G| p^{3})$ plus the total running time of all calls to \Augmenting. 	
\end{proof}

\begin{lemma}\label{lemma:AugmentingCalls}
	In a full execution of \Main, for every $u\in V(G)$, \Update executes \Augmenting with $u$ as an input at most $O(p^2)$ times.
\end{lemma}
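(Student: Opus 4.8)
Write $\phi(u) := |N_L(u)| + |\M_u|$. By Lemma~\ref{lemma:small-pp-small-matching} together with Condition~(3), $\M_u$ is always a maximal matching between $N_R(u)$ and $T^2_L(u)$, so $\phi(u) \le \pp^2_L(u)$ with equality iff $\M_u$ is maximum; and while $u \notin \Candidates$ we have $\phi(u) \ge p+1$ by Condition~(3), hence $\pp^2_L(u) \ge p+1$ by Condition~(1). \Update calls \Augmenting on $u$ only in Loop~\Mark{4}, when $u \in \mathsf{Check}$, $u \notin \Candidates$, and $\phi(u) = p$. By Claim~\ref{claim:down-by-one} a single \Update lowers $\phi(u)$ by at most one before Loop~\Mark{4}, so every such call is witnessed by $\phi(u)$ dropping momentarily from $p+1$ to $p$ and, when \Augmenting returns TRUE, being lifted back to $p+1$ (\Augmenting adds one edge to $\M_u$ by Lemma~\ref{lemma:Augmenting}). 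A call returning FALSE inserts $u$ into $\Candidates$, where it remains until $u$ is returned and removed from $L$ (nothing else removes vertices from $\Candidates$), after which \Augmenting never sees $u$ again; so there is at most one FALSE call and it suffices to bound the TRUE ones.

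\textbf{Restricting to the active phase and the easy calls.} No call occurs while $\phi(u) > p+1$, so discard that prefix and let the \emph{active phase} be the remaining suffix, beginning the first time $\phi(u) = p+1$; on it (until $u$ enters $\Candidates$) $\phi(u)$ is pinned to $p+1$ with transient dips to $p$ at the TRUE calls, so $|N_L(u)| \le p+1$ and $|\M_u| \le p+1$ throughout. Since $|N_L(u)|$ is nonincreasing over the whole run (no vertex re-enters $L$) and is $\le p+1$ when the active phase starts, at most $p+1$ neighbours of $u$ are removed during the active phase. A \Augmenting call on $u$ is placed on the agenda (i.e.\ $u$ added to $\mathsf{Check}$) either by Loop~\Mark{2}, which is triggered by the removal of a neighbour of $u$, or by Loop~\Mark{3}, which is triggered by the removal of a vertex $v \in \LMV{u}$ matched in $\M_u$; the first kind therefore accounts for at most $p+1$ TRUE calls.

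\textbf{The matching deficit.} For the calls triggered by the removal of a matched partner $v \in \LMV{u}$ I would use the deficit $\Delta := \pp^2_L(u) - \phi(u)$, which by Lemma~\ref{lemma:small-pp-small-matching} is $(\text{size of a maximum matching between } N_R(u) \text{ and } T^2_L(u)) - |\M_u| \ge 0$; since a maximal matching is at least half of a maximum one, $\Delta \le |\M_u| \le p+1$ throughout the active phase, and at its start $\Delta_{\mathrm{start}} \le p+1$. Each vertex removal changes $\pp^2_L(u)$ by at most one (deleting one vertex from $L$ destroys at most one path of any packing, using Lemma~\ref{lemma:Cand-stays-Cand} for monotonicity) and changes $\phi(u)$ by at most one before Loop~\Mark{4}, and whenever $\phi(u)$ dips to $p$ the triggered TRUE \Augmenting lifts it back. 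Going through the cases (removed vertex unrelated to $u$, in $N_L(u)$, or in $\LMV{u}$; Loop~\Mark{2}/Loop~\Mark{3} succeeding or failing) shows that the net change of $\Delta$ per \Update over the active phase is $\le 0$, so $\Delta$ is nonincreasing there; and a matched-partner removal that additionally decreases $\pp^2_L(u)$ strictly decreases $\Delta$, so at most $\Delta_{\mathrm{start}} \le p+1$ of those occur.

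\textbf{The hard part.} What remains, and what I expect to be the real obstacle, are the TRUE calls triggered by matched-partner removals that \emph{preserve} $\pp^2_L(u)$: such an event leaves $|N_L(u)|$, $|\M_u|$, $\Delta$ and $\pp^2_L(u)$ all unchanged, so none of these potentials can pay for it, and every naive attempt to bound them becomes circular because each TRUE call re-matches new vertices of $N_R(u)$ and $T^2_L(u)$. The plan is to charge these events to the turnover of $\LMV{u} \subseteq T^2_L(u)$, whose size stays $\le p+1$ on the active phase, using the following facts: a vertex enters $\LMV{u}$ at most once over the whole run (leaving $\LMV{u}$ requires leaving $L$, which is permanent); on a maximal stretch of the active phase where $|N_L(u)|$ is constant, Loop~\Mark{2} never touches $\M_u$, so $\LMV{u}$ changes only through Loop~\Mark{3} and \Augmenting; there are $\le p+1$ such stretches; and on each stretch the insertions into $\LMV{u}$ (at most $p+1$ slots, filled via Loop~\Mark{3} re-matches and via \Augmenting) can be balanced against the deletions caused by partner-removals. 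Making this balancing argument tight — presumably invoking the structure of the augmenting paths \Augmenting finds — is the crux; once it yields an $O(p)$ bound per stretch, summing over the $\le p+1$ stretches and adding the $O(p)$ contributions from the previous two paragraphs gives the claimed $O(p^2)$ calls.
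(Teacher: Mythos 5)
Your decomposition into the two trigger types --- a neighbour of $u$ leaving $L$ (Loop~\Mark{2}), versus a matched partner in $\LMV{u}$ leaving $L$ (Loop~\Mark{3}) --- matches the paper's case split, and your $O(p)$ bound on the first type is essentially the paper's. But the proposal is not a proof: you explicitly leave the second type unresolved except for the sub-case where $\pp^2_L(u)$ drops, and the ``balancing argument'' you sketch for the remaining calls (maximal stretches of constant $|N_L(u)|$, with $O(p)$ turnover of $\LMV{u}$ per stretch) is exactly the part that needs to be carried out and is not obviously true as stated: nothing in your setup caps the number of Loop~\Mark{3} events within a single stretch, since each re-match and each successful augmentation can pull fresh vertices into $\LMV{u}$, which is the circularity you yourself identify.

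The paper closes this gap with a global counting argument that makes the deficit $\Delta$, the stretch decomposition, and the case distinction on whether $\pp^2_L(u)$ is preserved all unnecessary. Let $\star$ denote the first call to \Augmenting($u$). Every call triggered by a matched-partner removal deletes from $\LMV{u}$ a vertex that has just moved from $L$ to $R$ and hence can never re-enter $T^2_L(u)$; so the number of such calls is at most the number of \emph{distinct} vertices that ever belong to $T^2_L(u)$ from $\star$ onwards. That number is $O(p^2)$: at time $\star$, maximality of $\M_u$ gives $T^2_L(u)\subseteq\bigcup_{x\in \RMV{u}}N_L(x)$ with $|\RMV{u}|\le p$ and $|N_L(x)|\le p$ for each such $x$, contributing $p^2$; afterwards a new vertex enters $T^2_L(u)$ only when one of the at most $p$ remaining vertices of $N_L(u)$ moves to $R$, each contributing at most $p$ new vertices, for another $p^2$. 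To complete your write-up you would need either to carry out the per-stretch balancing in full or to switch to this ``total pool of $T^2_L(u)$-vertices'' accounting; as it stands the crux of the lemma is missing.
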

\begin{proof}
	Fix an arbitrary vertex $u \in V(G)$ and consider any execution of \Update. Let $\M_u$ and $N_{L}(u)$ be the values $\M_u$ and $N_{L}(u)$ as they are at the beginning. We again annotate data structures with~$\Marki$ to refer to their value just after the execution of Loop~\Mark{i}. 

	Loop~\Mark{4} calls \Augmenting($u$) exactly when
	$|\Three \M_u| + |\Three N_L(u)| = p$ and $u \not\in \Three \Candidates$.
	Note that $\Three \Candidates = \Candidates$ and~$\Three N_L(u) = \One N_L(u)$.
	By Lemma~\ref{lemma:Update}, this can only happen if $|\M_u| + |N_L(u)| = p + 1$ and either (i) Loop~\Mark{1} reduced $N_L(u)$ and Loop~\Mark{2} did not add an edge to~$\M_u$, or (ii) Loop~\Mark{3} removed an edge from~$\M_u$ and did not add another one. We now show that (i) and (ii) can happen at most~$O(p^2)$ times for~$u$ during a full run of \Main.
	
	Consider the first execution of \Update where \Augmenting($u$) was called and let us call this point in time~$\star$. Since $|\starred \M_u| + |\starred N_L(u)| = p$ we know that $|\starred N_L(u)| \leq p$. 
	We further have that~$\Four \starred N_L(u) \subseteq \starred N_L(u)$ so this bound holds for the rest of the run of \Main. Therefore, (i) can only happen $|\starred N_L(u)| \leq p$ many
	times. Now note that every time~(ii) happens, the Loop~\Mark{3} removes an edge from the current matching~$\M_u$ and in particular a vertex from the
	current~$\LMV{u}$. We will now bound the number of vertices that can from $\star$ onwards appear in~$T^2_L(u)$, and therefore in~$\LMV{u}$, which then bound the number of times~(ii) can happen.

	First, let us bound the number of vertices in~$\starred T^2_L(u)$. Since~$\starred M_u$ is a maximal matching between~$\starred N_R(u)$ and~$\starred T^2_L(u)$ it holds that~$\starred T^2_L(u) \subseteq \bigcup_{x \in \starred \RMV{u}} N_L(x)$. 
	Therefore, $|\starred T^2_L(u)| \leq p |\starred \RMV{u}|$ as every vertex in~$\starred R$	has at most~$p$ neighbours in~$\starred L$. As established above,	$|\starred \M_u| \leq p$ therefore $|\starred T^2_L(u)| \leq p^2$.

	Now note that if some vertex~$z$ is added to~$T^2_L(u)$ at some point after $\star$, then it is because some vertex~$v \in N_L(u)$ with~$z \in N_L(v)$ was moved from~$L$ to~$R$ . At this point, $|N_L(v)| \leq p$ so~$v$ adds at most~$p$ vertices to~$T^2_L(u)$. Finally note that this can happen at most~$p$ times since~$|\starred N_L(u)| \leq p$ and this set can only decrease in size. In total, these occurrences after~$\star$ add at most~$p^2$ further vertices to~$T^2_L(u)$. 

	We conclude that in total at most~$O(p^2)$ vertices are added to~$T^2_L(u)$
	after~$\star$ and therefore at most that many vertices can appear in~$\LMV{u}$. Thus~(ii) can only happen~$O(p^2)$ times in total which proves the claim.
\end{proof}

\lemmaRada*
\begin{proof}
	Let~$\M_v$ be the matching before \Augmenting($v$) is executed and $\starred \M_v$ afterwards. As shown in the the proof of Lemma~\ref{lemma:Update}, when \Augmenting($v$) is called, the matching $\M_v$ is maximal with at most $p$ edges.

	Let~$H_v$ be the bipartite graph with sides~$N_R(v)$, $T^2_L(v)$ and all edges between these sets that exist in~$G$. If~$\M_v$ is not maximum, then there exists an augmenting path~$P$ in the~$H_v$ which starts in~$T^2_L(u) \setminus V(\M_v)$, goes through~$H_v[\M_v]$ with alternating matching and non-matching edges, and ends in~$N_R(v) \setminus V(\M_v)$. 

	Let~$\vec H_v$, $S \subseteq N_R(u)$, $T \subseteq T^2_L(v)$, and~$\out$ be as constructed by \Augmenting. Let us first show how $\vec H_v$ relates to~$H_v$:

	\begin{claim}
		The bipartite graph~$\vec H_v[\M_v]$ is an orientation of~$H_v[\M_v]$ where the edges~$\M_v$ go from~$N_R(v)$ to~$T^2_L(v)$ and all other edges go in the other direction. 
	\end{claim}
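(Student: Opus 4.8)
The plan is to prove the claim by reading off what the first nested loop of \Augmenting does, after matching the sets in the pseudocode to those in the definition of $H_v$. First I would record the easy bookkeeping: by the maintained data-structure invariants $\M_v$ is a matching with $\RMV{v}\subseteq N_R(v)$ and $\LMV{v}\subseteq T^2_L(v)$, so $V(\M_v)=\RMV{v}\cup\LMV{v}$, and this union is disjoint because $T^2_L(v)\cap N(v)=\emptyset$ forces $T^2_L(v)\cap N_R(v)=\emptyset$. Thus $V(\M_v)$ splits canonically into an ``$N_R$-side'' $\RMV{v}$ and a ``$T^2_L$-side'' $\LMV{v}$, and $H_v[\M_v]$ is exactly the graph on $V(\M_v)$ whose edges are the edges of $G$ with one endpoint in $\RMV{v}$ and the other in $\LMV{v}$; in particular $\M_v\subseteq E(H_v[\M_v])$.

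Next I would observe that arcs are added to $\vec H_v$ only inside the first nested loop (the second loop only writes to $S$ and $\out$), and only in its two branches guarded by ``$uw\in E(\M_v)$'' and by ``$u\in\RMV{v}$''. In either branch $u\in\RMV{v}$, and the other endpoint $w$ always lies in $\LMV{v}$, so every arc of $\vec H_v$ has both endpoints in $V(\M_v)$; hence $\vec H_v[\M_v]=\vec H_v$ and it suffices to examine, for a fixed pair $(u,w)$ with $u\in N_R(v)$ and $w\in\LMV{v}$, which arc the loop produces. If $uw\in\M_v$ then (necessarily $u\in\RMV{v}$, and) the loop adds the arc from $u$ to $w$, i.e.\ from the $N_R$-side to the $T^2_L$-side. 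If $uw\in E(G)\setminus\M_v$ and $u\in\RMV{v}$, the test succeeds and the loop adds the arc from $w$ to $u$, i.e.\ in the reverse direction; if instead $u\notin\RMV{v}$ no arc is added, consistent with $u\notin V(\M_v)$. If $uw\notin E(G)$, nothing is added. Since the loop visits each pair $(u,w)\in N_R(v)\times\LMV{v}$ exactly once, no arc is created twice.

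Finally I would assemble these observations: the assignment sending each edge $uw$ of $H_v[\M_v]$ to the unique arc the loop creates for it is a bijection onto $E(\vec H_v)$ that preserves the underlying undirected edge, orients every edge of $\M_v$ from $N_R(v)$ towards $T^2_L(v)$, and orients every other edge of $H_v[\M_v]$ from $T^2_L(v)$ towards $N_R(v)$; this is precisely the orientation asserted, and the resulting digraph is bipartite with sides $\RMV{v}$ and $\LMV{v}$. I do not expect any genuine mathematical difficulty here, since the whole argument is a careful reading of the pseudocode; the only point that needs real attention is the first step, namely confirming that $\LMV{v}$ and $\RMV{v}$ are exactly $T^2_L(v)\cap V(\M_v)$ and $N_R(v)\cap V(\M_v)$ and that no arc of $\vec H_v$ can escape $V(\M_v)$ — this is what makes the notation $\vec H_v[\M_v]$ behave as intended.
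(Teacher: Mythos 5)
Your proof is correct and follows essentially the same route as the paper's: both arguments simply read off the first nested loop of \Augmenting, checking that for each pair $u \in N_R(v)$, $w \in \LMV{v}$ the arc $uw$ is added when $uw \in E(\M_v)$ and the arc $wu$ when $uw \in E(G)\setminus E(\M_v)$ with $u \in \RMV{v}$. Your version is in fact more careful than the paper's one-line argument, since you explicitly verify the bipartition $V(\M_v)=\RMV{v}\cup\LMV{v}$, handle the unmatched-$u$ branch, and confirm the edge-by-edge bijection with the stated orientation.
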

	\begin{proof}
		The loops in line~4 and line~5 iterate over all pairs~$u \in N_R(v)$, $w \in \LMV{v}$ and add either the arc~$uw$ to~$\vec H_v$ if~$uw \in E(M_v)$ or the arc~$wu$ if~$uw \in E(G) \setminus E(M_v)$. Therefore all edges in~$H_v[\M_v]$ appear in~$\vec H_v$.
	\end{proof}

	\noindent
	We argue that if an augmenting path~$P$ exists in~$H_v$, then there exists a directed
	path~$\vec P$ from~$S$ to~$T$ in $\vec H_v$. To that end, let~$P = asQtb$,
	with~$a \in T^2_L(v)$, $sQt \subseteq V(M_v)$, and~$b \in N_R(v)$. Since~$a \in N_R(s)$ and~$b \in N_L(t)$, we can conclude that~$\out(s)$ and~$\out(t)$ exist and that~$s \in S$ and $t \in T$. 
	Since the edges in~$sQt$ must alternate between matching and non-matching edges, it is easy to verify that~$sQt$ is indeed a directed path in~$\vec H_v$. 
	Accordingly, if \Augmenting returns FALSE the matching $\M_v$ does not have an augmenting path in~$H_v$ and is therefore a maximum matching.

	Let us now argue that for any directed path~$\vec P = s\vec Q t$ from~$S$ to~$T$ in~$\vec H_v$ it holds that the path~$\out(s) s \vec Q t \out(t)$ is an augmenting path for~$\M_v$ in~$H_v$. By construction of~$\vec H_v$, every vertex in~$N_R(v)$ has exactly one outgoing edge, namely the matching edge. Therefore, a directed path from~$S \subseteq N_R(v)$ to~$T \subseteq T^2_L(v)$ must alternative between matching and non-matching edges. Since~$\out(s) s$
	and $t \out(t)$ are edges outside of~$\M_v$, we conclude that $\out(s) s \vec Q t \out(t)$ is indeed an augmenting path for~$\M_v$ in~$H_v$. Lines~$20$ to~$24$ perform the usual augmentation of~$\M_v$ using the edges of this path, therefore we conclude that \Augmenting indeed returns a matching~$\starred M_v$ of size $|\M_v| + 1$. The maximality of~$\starred M_v$ simply follows from the maximality of~$\M_v$.
\end{proof}


\begin{lemma}\label{lemma:AugmentingRT}
	The total runtime of \Augmenting in a full execution of \Main is $O(p^4 |G|)$.
\end{lemma}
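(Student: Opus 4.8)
The plan is to bound the running time of a single invocation of \Augmenting in terms of~$p$ and the degree of its input vertex, then multiply by the per-vertex call count supplied by Lemma~\ref{lemma:AugmentingCalls}, and finally sum over all vertices using the handshake identity together with the standing assumption~$\|G\| \le p|G|$.

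First I would bound one call \Augmenting($v$). As observed in the proof of Lemma~\ref{lemma:Augmenting}, when this call is made the matching~$\M_v$ is maximal and has at most~$p$ edges, so~$|\LMV{v}|, |\RMV{v}| \le p$ and~$\vec H_v$ has at most~$2p$ vertices. Assuming hashed data structures so that all elementary operations take expected constant time, I would proceed loop by loop. Building~$\vec H_v$ and initialising~$S,T,\out$ costs~$O(p)$. The double loop over~$N_R(v)$ and~$\LMV{v}$ performs~$|N_R(v)|\cdot|\LMV{v}| = O(\deg(v)\cdot p)$ iterations, each doing only constant work (membership queries in~$E(\M_v)$, $E(G)$, $\RMV{v}$). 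The double loop over~$\RMV{v}$ and the sets~$N_L(u)$ performs~$\sum_{u\in\RMV{v}}|N_L(u)| = O(p^2)$ iterations, because every~$u\in\RMV{v}\subseteq R$ was at some earlier point a candidate and hence~$|N_L(u)|\le p$, exactly as argued in the proof of Lemma~\ref{lemma:UpdateRT}. Finally,~$\vec H_v$ has~$O(p)$ vertices and at most~$|\LMV{v}|\cdot|\RMV{v}| + |\M_v| = O(p^2)$ arcs, so locating an~$S$--$T$ path (a single traversal from a super-source) and performing the augmentation costs~$O(p^2)$. Altogether a single call costs~$O(\deg(v)\cdot p + p^2)$.

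Then, by Lemma~\ref{lemma:AugmentingCalls}, \Augmenting is called with any fixed input~$v$ at most~$O(p^2)$ times during a full run of \Main, so the total work attributable to~$v$ is~$O(p^2)\cdot O(\deg(v)\cdot p + p^2) = O(p^3\deg(v) + p^4)$. Summing over all~$v\in V(G)$ yields
\[
  \sum_{v\in V(G)} O\big(p^3\deg(v) + p^4\big)
  = O(p^3)\cdot 2\|G\| + O(p^4|G|) = O(p^4|G|),
\]
where the last equality uses~$\|G\| \le p|G|$.

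The one point that needs care is the double loop over~$N_R(v)$: since~$|N_R(v)|$ is not bounded by any function of~$p$, a single call can genuinely cost~$\Theta(\deg(v)\cdot p)$, so one cannot bound each call by a polynomial in~$p$ alone and then multiply by the~$O(p^2|G|)$ total call count, which would only give~$O(p^5|G|)$. The resolution, and the only mildly subtle part of the argument, is to carry the~$\deg(v)$ factor through untouched and collapse it only at the very end via~$\sum_v \deg(v) = 2\|G\| = O(p|G|)$; the~$O(p^2)$ bound on the number of calls per vertex from Lemma~\ref{lemma:AugmentingCalls} is precisely what makes this telescope down to~$O(p^4|G|)$.
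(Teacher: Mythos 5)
Your proposal is correct and follows essentially the same route as the paper: the paper also bounds the first double loop by $\sum_{v}O(p^2)\cdot|N(v)|\cdot p = O(p^3)\cdot 2\|G\| = O(p^4|G|)$ (carrying the degree factor to the final sum and using $\|G\|\le p|G|$), bounds the second loop and the DFS by $O(p^2)$ per call, and multiplies by the $O(p^2)$ per-vertex call count from Lemma~\ref{lemma:AugmentingCalls}. The "subtle point" you flag about not bounding a single call purely in terms of $p$ is exactly how the paper's accounting works as well.
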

\begin{proof}
	We note that since we use hashes for the data structures, all the operations excluding loops take constant time.
	Thus, to bound the total running time we only need to bound the number of times every loop is executed over all calls to \Augmenting.
	
	The first loop in \Augmenting is over the vertices in $N_R(v)$ and it has an internal loop over the vertices of $\LMV{v}$.
	According to \Update, when the algorithm is executed for $v$, $|\LMV{v}| \leq p$.
	By the second condition over the data structures, $N_R(v)\subseteq N(v)$.
	According to Lemma~\ref{lemma:AugmentingCalls}, \Augmenting is executed at most $O(p^2)$ times with $v$ as an input.

	So, over a full execution of \Main, the total running time of the first loop is 
	\[ 
		\sum_{v\in V(G)}O(p^2)|N(v)|p = O(p^3) \cdot 2\|G\| = O(p^4 |G|).
	\]
	The second loop in \Augmenting is over the vertices in $\RMV{v}$ and it has an internal loop over the vertices of $N_L(v)$.
	Recall that \Update calls \Augmenting($v$) when $|N_L(v)| + |\RMV{v}| \leq p$.
	So, by a similar computation to the above, over a full execution of \Main the total running time of the second loop is $O(p^4 |G|)$.
	
	Finally, the conditional statement at line~19 of \Augmenting requires the computation of a directed path. By executing a simple DFS from~$S$ to~$T$
	in~$\vec H_v$ in time~$O( \|\vec H_v\|) = O(p^2)$. The computation of these paths therefore also contributed a total running time of~$O(p^4 |G|)$ and the claim follows.
\end{proof}

\begin{theorem}\label{theorem:main}
	On input graph $G$ and input parameter $p$,
	the total runtime of \Main	is $O(p^4 |G|)$, the total space used \Main is $O(\|G\| + p^2)$,
	 and the algorithm returns FALSE if $\adm_2(G) > p$ and otherwise it returns an ordering $\orderstrict$ such that $\adm_2(G,\orderstrict)\leq p$ of the vertices of $G$.
\end{theorem}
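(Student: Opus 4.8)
The plan is to assemble the theorem from the lemmas already established. Correctness follows by combining Lemma~\ref{lemma:r-admissibility} with Lemma~\ref{lemma:Update}. By Lemma~\ref{lemma:Update}, Condition~(1) holds before every call to ``Return vertex'', so the set $\Candidates$ contains exactly the vertices $u \in L$ with $\pp^2_L(u) \leq p$ that the Oracle needs; thus the Oracle faithfully implements the abstract oracle from the greedy algorithm. First I would argue that if the Oracle ever returns FALSE (i.e.\ $\Candidates = \emptyset$ while $L \neq \emptyset$), then by Condition~(1) no vertex $u \in L$ satisfies $\pp^2_L(u) \leq p$, and by Lemma~\ref{lemma:r-admissibility} (applied with $U = L$, $r = 2$) this witnesses $\adm_2(G) > p$, so returning FALSE is correct. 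Conversely, if the greedy loop runs to completion it produces an order $v_1, \ldots, v_{|G|}$ where each $v_i$ was removed from $L = \{v_1, \ldots, v_i\}$ with $\pp^2_{V \setminus \{v_1,\ldots,v_i\}}(v_i) \le p$; since $\Vleft_{v_i}$ under this order is exactly $\{v_1,\ldots,v_{i-1}\} \subseteq V \setminus \{v_1,\ldots,v_i\}$, Lemma~\ref{lemma:Cand-stays-Cand} gives $\pp^2_{\Vleft_{v_i}}(v_i) \le \pp^2_{V\setminus\{v_1,\ldots,v_i\}}(v_i) \le p$, hence $\adm_2(G,\orderstrict) \le p$. (Here I also need to note the precondition $\|G\| \le p|G|$: if it fails we return FALSE immediately, which is correct by Lemma~\ref{lemma:NumberOfEdges}.)

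Next I would handle the running time. \Main runs ``Return vertex'' $|G|$ times (trivial cost each, since it just pops from $\Candidates$) and ``Update'' (\Update) $|G|$ times, once per vertex removed. By Lemma~\ref{lemma:UpdateRT} the total cost of all \Update calls is $O(p^3|G|)$ plus the total cost of all \Augmenting calls, and by Lemma~\ref{lemma:AugmentingRT} the latter is $O(p^4|G|)$. The initialisation (building $N_L$, $N_R$, empty matchings, and $\Candidates$) costs $O(\|G\|) = O(p|G|)$ using the precondition. Summing, the total runtime is $O(p^4|G|)$.

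Finally, the space bound. The graph itself and the neighbourhood data structures $N_L(v), N_R(v)$ over all $v$ take $O(\|G\|)$ space in total. For the matchings, I would observe that each $\M_v$ is a matching between $N_R(v)$ and $T^2_L(v)$; while a vertex is in $\Candidates$ its matching has size $\le p$ by Lemma~\ref{lemma:small-pp-small-matching}, but more carefully, for $v \notin \Candidates$ we have $|\M_v| \le |N_R(v)|$, and $\sum_v |N_R(v)| \le \sum_v \deg(v) = 2\|G\|$, so all matchings together take $O(\|G\|)$ space. The auxiliary graph $\vec H_v$ built inside a single \Augmenting call has $O(|\M_v|) = O(p)$ vertices and $O(p^2)$ edges and is discarded afterward, contributing $O(p^2)$ working space; the sets $\mathsf{Check}$, $S$, $T$, $\out$ are similarly bounded by $O(p^3)$ — here I should double check whether $O(p^3)$ or $O(p^2)$ is the right transient bound, but in any case it is dominated provided we are slightly careful, or the stated bound $O(\|G\| + p^2)$ should be read as $O(\|G\| + p^3)$; I will reconcile this with the $|\mathsf{Check}| = O(p^3)$ bound from Lemma~\ref{lemma:UpdateRT}, possibly by processing $\mathsf{Check}$ incrementally rather than storing it. The main obstacle I anticipate is precisely this space accounting for the transient structures: making sure that the peak simultaneous usage of $\mathsf{Check}$, $\vec H_v$, and the per-vertex matchings genuinely fits in $O(\|G\| + p^2)$ rather than $O(\|G\| + p^3)$, which may require the observation that $\mathsf{Check}$ can be consumed on the fly and that $\vec H_v$ only needs $O(p^2)$ edges because $H_v[\M_v]$ has only $O(p)$ vertices.
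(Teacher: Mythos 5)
Your proposal is correct and takes essentially the same route as the paper: correctness is assembled from Lemmas~\ref{lemma:NumberOfEdges}, \ref{lemma:r-admissibility} and~\ref{lemma:Update}, the running time from Lemmas~\ref{lemma:UpdateRT} and~\ref{lemma:AugmentingRT}, and the space bound by charging the neighbourhood structures and the matchings to $O(\|G\|)$ (via $|\M_v|\le|N_R(v)|$) and the transient graph $\vec H_v$ to $O(p^2)$. The loose end you flag about $\mathsf{Check}$ resolves in your favour without any on-the-fly processing: only the \emph{outer} iterations of Loops~\Mark{2} and~\Mark{3} insert into $\mathsf{Check}$, so $|\mathsf{Check}|\le |N_L(v)| + p^2 = O(p^2)$ and fits the stated bound (the $O(p^3)$ in Lemma~\ref{lemma:UpdateRT} counts inner-loop iterations, not insertions). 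One small slip in your correctness direction: at the moment $v_i$ is removed the greedy step guarantees $\pp^2_L(v_i)\le p$ for $L=\{v_1,\dots,v_i\}$, and the inclusion needed for Lemma~\ref{lemma:Cand-stays-Cand} is $\Vleft_{v_i}=\{v_1,\dots,v_{i-1}\}\subseteq\{v_1,\dots,v_i\}$, not $\subseteq V\setminus\{v_1,\dots,v_i\}$ as written.
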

\begin{proof}
	According to Lemma~\ref{lemma:UpdateRT}, the overall running of \Update is $O(p^{3} |G|)$ plus the total running time of \Augmenting.
	According to Lemma~\ref{lemma:AugmentingRT}, the total running time of all the executions of \Augmenting is $O(p^4 |G|)$.
	The operations of \Main do not add more time and we arrive at the claimed running time of $O(p^4 |G|)$.
	
  \noindent
 	To compute the space, recall that the Oracle data structure consists of
 	\begin{itemize}
 		\item $\Candidates$, a subset of $V(G)$, and hence of size at most $|G|$.
 		\item $N_L(v)$ and $N_R(v)$ for every $v\in V(G)$, so overall order of $\sum_{v\in V(G)}|N_L(v)| + |N_R(v)| = \sum_{v\in V(G)}|N(v)| = \|G\|$, and
 		\item A matching $\M_v$, between the vertices of $N_R(v)$ and some other vertices, thus, overall  $\sum_{v\in V(G)} |N_R(v)| \leq \sum_{v\in V(G)}|N(v)| = \|G\|$.
 	\end{itemize} 
 	Beyond this, the routine \Augmenting when executed for a vertex $v$
 	constructs a directed graph between the vertices of $\RMV{v}$ and $\LMV{v}$. Since \Update ensures this only happens when $|\M_v| = O(p)$,
 	we can conclude that \Augmenting uses at most $O(p^2)$ space.
 	Consequently the total space used by \Main is $O(p^2 + \|G\|)$. 
 	
  Finally, to prove the correctness of the algorithm, recall that
  by Lemma~\ref{lemma:r-admissibility} the $2$-admissibility of the input graph
  is decided correctly if every execution of \Update returns a vertex~$v$
  such that~$\pp_L(v) \leq p$ if such a vertex exists and otherwise return FALSE. Lemma~\ref{lemma:Augmenting} proves that \Augmenting works as claimed,
  accordingly Lemma~\ref{lemma:Update} then shows that \Main works as claimed.
\end{proof}

\newcommand{\netw}[1]{\texttt{#1}}

\section{Implementation details and experiments}

We implemented the algorithm in Rust v1.75 and computed the $2$-admissibility of \numnetworks real-world networks. The full results are available in the repository, we present here a summary of our findings.
The dataset contains networks from various domains, like
biology (protein-protein interaction networks like \netw{BioGrid-Homo-Sapiens}, neural networks like \netw{bn-mouse\_retina\_1});
infrastructure networks 
(\netw{autobahn}, \netw{euroroad}, \netw{exnet-water});
social networks (\netw{livemocha}, \netw{soc-gplus}, \netw{twittercrawl});
dependency graphs (\netw{JDK\_dependency}, \netw{linux});
web crawls (\netw{web-EPA}, \netw{web-california});
citation and co-author networks (\netw{cit-HepPh}, \netw{ca-HepPh}, \netw{zewail});
digital communication networks (\netw{email-Enron});
word networks (\netw{foldoc}, \netw{edinburgh\_associative\_thesaurus});
and ownership networks (\netw{bahamas}, \netw{offshore}). 

 All experiments were run with an AMD Ryzen R1600 @ 2.6 GHz and 8GB of RAM.
Since the algorithm is designed to solve the decision variant, we compute the $2$-admissibility using binary search. We measured memory consumption by running separate experiments with a modified memory allocator that records peak allocation. The variation of both running times and peak memory were small, we therefore report results of single runs here and not averages over multiple runs.

\begin{figure}[!tbh]
  \vspace*{-8pt}%
  \hspace*{-10pt}\includegraphics[width=\textwidth]{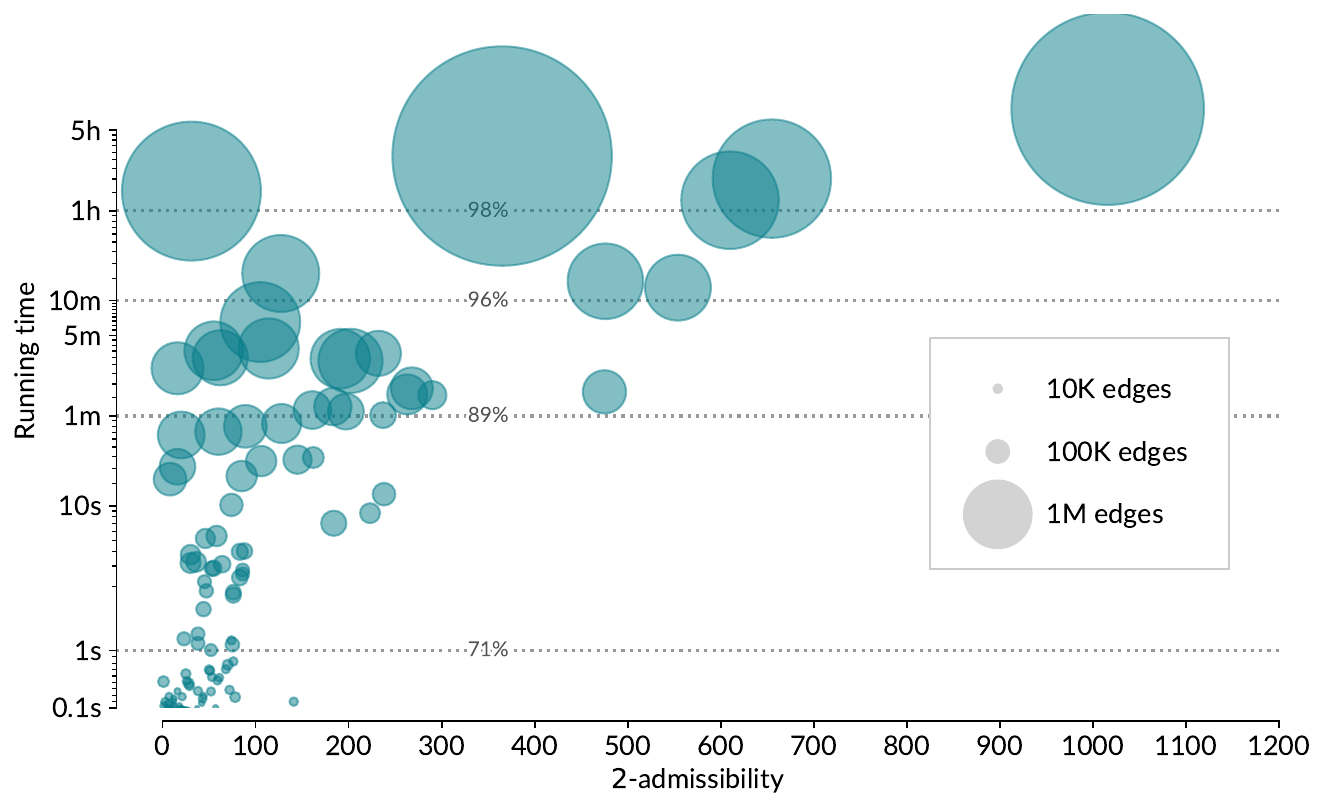}
  \vspace*{-5pt}
  \caption{%
    \label{fig:time}\
    \small
    Running time dependence on $2$-admissibility. Marker sizes represent the number of edges. The dotted lines and percentages indicate on how many networks the algorithm finished below the specified time.
  }
\end{figure}

\begin{figure}[!bth]
  \vspace*{-8pt}%
  \hspace*{-10pt}\includegraphics[width=\textwidth]{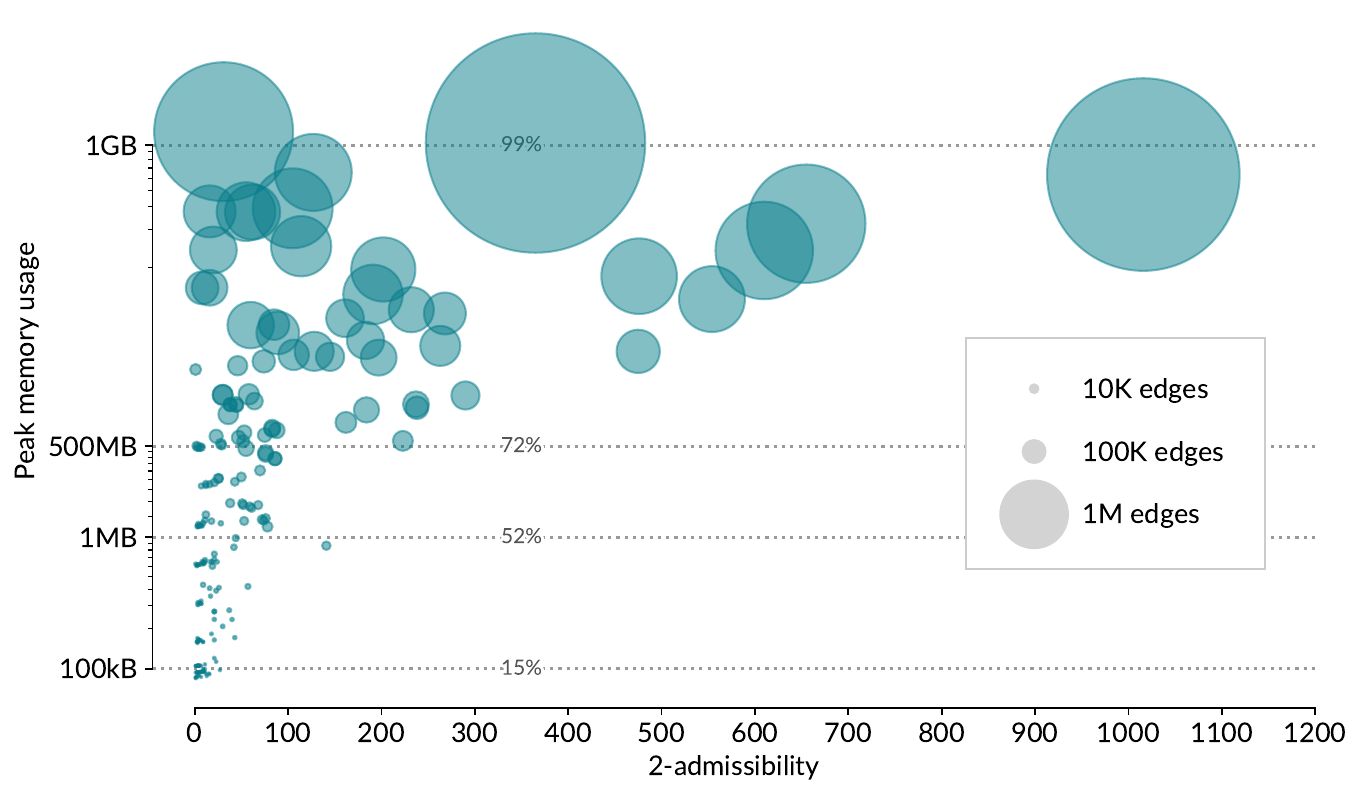}\vspace*{-5pt}
  \caption{%
    \label{fig:memory}\
    \small
    Peak memory consumption dependence on $2$-admissibility. Marker sizes represent the number of edges. The dotted lines and percentages indicate on how many networks the algorithm used less than the indicate amount of memory. On average, a third of the memory consumption is due to storing the network.
  }
\end{figure}

\paragraph*{The algorithm is resource-efficient}

For the running time (Figure~\ref{fig:time}) we find that the algorithm performs well for networks whose $2$-admissibility lies below $\sim$300, in this range the network size is the dominating factor. This indicates that the worst-case of~$O(p^4 |G|)$ is quite pessimistic and therefore that the `optimistic' algorithm design paid off. The computation took less than ten minutes on almost all networks (96\%), including some with millions of edges, and finished in less than a second for most smaller networks (71\%). 

Memory consumption was, as predicted by theory, modest (Figure~\ref{fig:memory}). None of our experiments used more than 1.34 GB of RAM. Note that already simply storing the networks in-memory accounts for a good fraction of the total memory usage: Among the 44~networks with noticable memory footprint (> 10MB), on average 30\% of memory accounts for storing the network itself and 70\% is used by the algorithm's data structures.

\paragraph*{The $2$-admissibility is small in many real-world graphs}

\begin{figure}[!bth]
   \hspace*{-8pt}\includegraphics[width=\columnwidth]{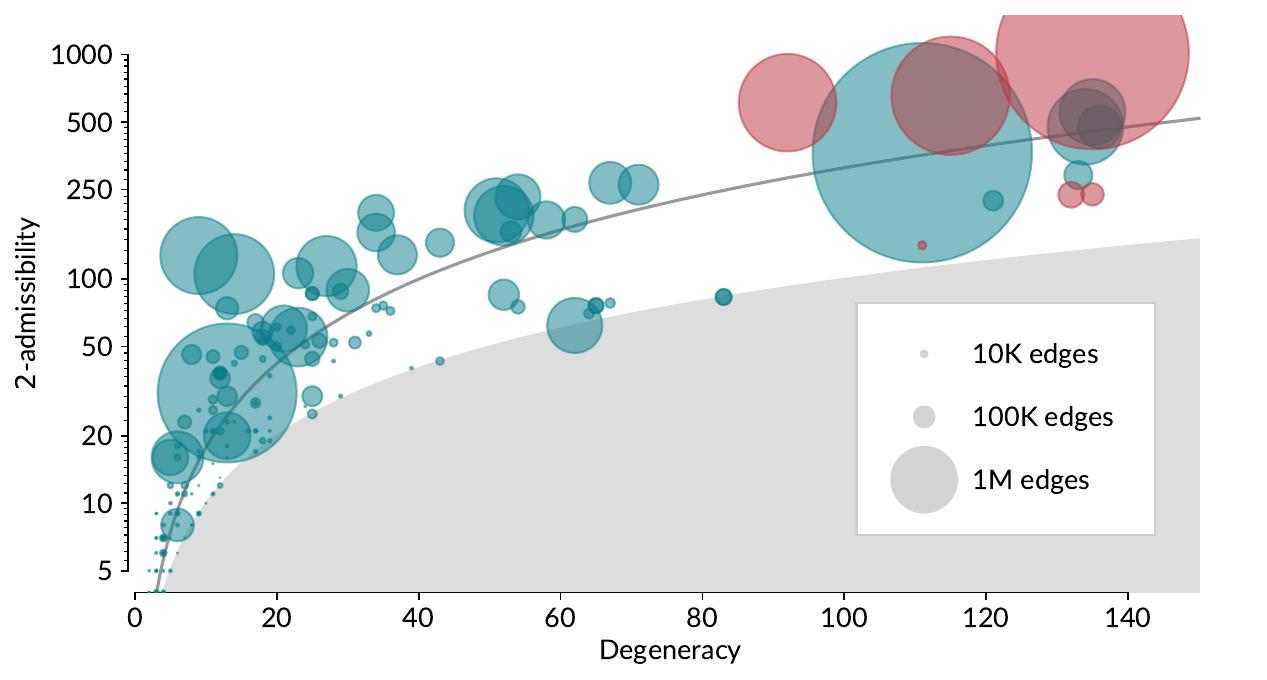}
   \vspace*{-5pt}
  \caption{%
    \label{fig:degen}
    \small
    Comparison of $2$-admissibility and degeneracy. Marker sizes represent the number of edges. The line represents a best-fit polynomial $\adm_2 \approx \deg^{1.25}$, the red markers are outliers whose residual has an
    absolute z-score $\geq 3$.
  }
\end{figure}

We find that 25\% of networks have a $2$-admissibility of less than~$5$ and 50\% have a $2$-admissibility of~$16$ or less (the latter includes networks with over half a million edges). This means that even algorithms that have an exponential dependence on the $2$-admissibility could be useful on these instances.

As networks vary in density, we compare the $2$-admissibility relative to the degeneracy of the networks to obtain a more unified view. Let~$d$ be the degeneracy (the independent variable) and $a_2$ the $2$-admissibility (the dependent variable) of the networks. We fitted the functions $a_2 = \alpha \dot d$, $a_2 = d^\alpha$, and~$a_2 = \alpha^d$ with parameter~$\alpha$ using \texttt{numpy}'s non-linear least squares optimisation, with the following results:

\smallskip
  \hfil{\begin{tabular}{l|lll}
             & Linear  & Polynomial  & Exponential \\ \hline
    Best fit & $3.045 \cdot d$ & $d^{1.2477}$  & $1.0475^d$ \\
    $R^2$    & $0.676$         & $0.693$       & $0.484273$  
  \end{tabular}}\hfil \\
\smallskip

\noindent
Slightly more complicated functions (like $\alpha d^\beta + \gamma$) did not result in a significantly better fit and we chose to keep the modelling 
simple. Hence, we propose that the $2$-admissibility grows roughly as $d^{1.25}$ (\cf Figure~\ref{fig:degen}), as such we expect most networks with low degeneracy to also have low $2$-admissibility

We find six outlier networks with this model by computing the
residual Z-scores. \netw{ca-HepPh}, \netw{mousebrain}, and \netw{twittercrawl}
have a residual Z-score $\leq -3$, therefore their $2$-admissibility is lower than expected; \netw{dogster\_friendships}, \netw{livemocha}, and
\netw{tv\_tropes} have a residual Z-score $\geq 3$, therefore their
$2$-admissibility is higher than expected. One possible reason is that the
first three networks have a relatively high clustering coefficient\footnote{The clustering coefficient of a vertex is the number of triangles it is part of divided by the number of pairs in its neighbourhood; the clustering coefficient of a graph is the average over these values.} (0.61, 0.76, and 0.26, respectively) compared to the latter three (0.17, 0.05, and 0.0, respectively), we leave the investigation of this relationship for future work.

\section{Conclusion}

We demonstrated that careful algorithm design and an `optimistic' approach resulted in a resource-efficient implementation to compute the $2$-admissibility of real-world graphs. Our experiments not only show that the implementation is of practical use, but also that the $2$-admissibility is indeed a measure that is reasonably small on most practical instances. 

In the future, we want to take our learnings from this algorithm and apply it to implement algorithms for $3$- and $4$-admissibility, as these measures might provide further insight into network structure and might still be tractable in practice. 

\bibliography{biblio}

\begin{thebibliography}{10}

\bibitem{amiriDistDomset2018}
Saeed~Akhoondian Amiri, Patrice~Ossona de~Mendez, Roman Rabinovich, and
  Sebastian Siebertz.
\newblock Distributed domination on graph classes of bounded expansion.
\newblock In Christian Scheideler and Jeremy~T. Fineman, editors, {\em
  Proceedings of the 30th on Symposium on Parallelism in Algorithms and
  Architectures, {SPAA} 2018, Vienna, Austria, July 16-18, 2018}, pages
  143--151. {ACM}, 2018.
\newblock \href {https://doi.org/10.1145/3210377.3210383}
  {\path{doi:10.1145/3210377.3210383}}.

\bibitem{arnborgEasyProblemsTreedecomposable1991}
Stefan Arnborg, Jens Lagergren, and Detlef Seese.
\newblock Easy problems for tree-decomposable graphs.
\newblock {\em Journal of Algorithms}, 12(2):308--340, June 1991.
\newblock \href {https://doi.org/10.1016/0196-6774(91)90006-K}
  {\path{doi:10.1016/0196-6774(91)90006-K}}.

\bibitem{beraCountingSix2020}
Suman~K. Bera, Noujan Pashanasangi, and C.~Seshadhri.
\newblock {Linear Time Subgraph Counting, Graph Degeneracy, and the Chasm at
  Size Six}.
\newblock In Thomas Vidick, editor, {\em 11th Innovations in Theoretical
  Computer Science Conference (ITCS 2020)}, volume 151 of {\em Leibniz
  International Proceedings in Informatics (LIPIcs)}, pages 38:1--38:20,
  Dagstuhl, Germany, 2020. Schloss Dagstuhl -- Leibniz-Zentrum f{\"u}r
  Informatik.
\newblock URL:
  \url{https://drops-dev.dagstuhl.de/entities/document/10.4230/LIPIcs.ITCS.2020.38},
  \href {https://doi.org/10.4230/LIPIcs.ITCS.2020.38}
  {\path{doi:10.4230/LIPIcs.ITCS.2020.38}}.

\bibitem{colHardness2023}
Michael {Breen-McKay}, Brian Lavallee, and Blair~D. Sullivan.
\newblock Hardness of the generalized coloring numbers.
\newblock {\em European Journal of Combinatorics}, page 103709, March 2023.
\newblock \href {https://doi.org/10.1016/j.ejc.2023.103709}
  {\path{doi:10.1016/j.ejc.2023.103709}}.

\bibitem{courcelleMonadicSecondorderLogic1990}
Bruno Courcelle.
\newblock The monadic second-order logic of graphs. {{I}}. {{Recognizable}}
  sets of finite graphs.
\newblock {\em Information and Computation}, 85(1):12--75, March 1990.
\newblock \href {https://doi.org/10.1016/0890-5401(90)90043-H}
  {\path{doi:10.1016/0890-5401(90)90043-H}}.

\bibitem{demaineAlgorithmicGraphMinor2005}
E.D. Demaine, M.T. Hajiaghayi, and K.~Kawarabayashi.
\newblock Algorithmic graph minor theory: {{Decomposition}}, approximation, and
  coloring.
\newblock In {\em 46th {{Annual IEEE Symposium}} on {{Foundations}} of
  {{Computer Science}} ({{FOCS}}'05)}, pages 637--646, October 2005.
\newblock \href {https://doi.org/10.1109/SFCS.2005.14}
  {\path{doi:10.1109/SFCS.2005.14}}.

\bibitem{TurbochargeWcol}
A.~Dobler, M.~Sorge, and A.~Villedieu.
\newblock {Turbocharging Heuristics for Weak Coloring Numbers}.
\newblock In {\em 30th Annual European Symposium on Algorithms (ESA 2022)},
  volume 244 of {\em Leibniz International Proceedings in Informatics
  (LIPIcs)}, pages 44:1--44:18, Dagstuhl, Germany, 2022. Schloss Dagstuhl --
  Leibniz-Zentrum f{\"u}r Informatik.
\newblock \href {https://doi.org/10.4230/LIPIcs.ESA.2022.44}
  {\path{doi:10.4230/LIPIcs.ESA.2022.44}}.

\bibitem{drangeDomsetKernel2016}
P{\aa}l~Gr{\o}n{\aa}s Drange, Markus Dregi, Fedor~V. Fomin, Stephan Kreutzer,
  Daniel Lokshtanov, Marcin Pilipczuk, Michal Pilipczuk, Felix Reidl,
  Fernando~S{\'a}nchez Villaamil, Saket Saurabh, Sebastian Siebertz, and
  Somnath Sikdar.
\newblock Kernelization and {{Sparseness}}: The {{Case}} of {{Dominating Set}}.
\newblock In Nicolas Ollinger and Heribert Vollmer, editors, {\em 33rd
  {{Symposium}} on {{Theoretical Aspects}} of {{Computer Science}} ({{STACS}}
  2016)}, volume~47 of {\em Leibniz {{International Proceedings}} in
  {{Informatics}} ({{LIPIcs}})}, pages 31:1--31:14, Dagstuhl, Germany, 2016.
  Schloss Dagstuhl--Leibniz-Zentrum fuer Informatik.
\newblock \href {https://doi.org/10.4230/LIPIcs.STACS.2016.31}
  {\path{doi:10.4230/LIPIcs.STACS.2016.31}}.

\bibitem{drangeComplexityDegenerate2023}
P\r{a}l~Gr{\o}n\r{a}s Drange, Patrick Greaves, Irene Muzi, and Felix Reidl.
\newblock {Computing Complexity Measures of Degenerate Graphs}.
\newblock In {\em 18th International Symposium on Parameterized and Exact
  Computation (IPEC 2023)}, volume 285 of {\em Leibniz International
  Proceedings in Informatics (LIPIcs)}, pages 14:1--14:21, Dagstuhl, Germany,
  2023. Schloss Dagstuhl -- Leibniz-Zentrum f{\"u}r Informatik.
\newblock URL:
  \url{https://drops-dev.dagstuhl.de/entities/document/10.4230/LIPIcs.IPEC.2023.14},
  \href {https://doi.org/10.4230/LIPIcs.IPEC.2023.14}
  {\path{doi:10.4230/LIPIcs.IPEC.2023.14}}.

\bibitem{dvorakDomset2013}
\Zdenek \Dvorak.
\newblock Constant-factor approximation of the domination number in sparse
  graphs.
\newblock {\em Eur. J. Comb.}, 34(5):833--840, 2013.
\newblock \href {https://doi.org/10.1016/j.ejc.2012.12.004}
  {\path{doi:10.1016/j.ejc.2012.12.004}}.

\bibitem{dvorakWeightedSublinearSeparators2022}
Zden{\v e}k Dvo{\v r}{\'a}k.
\newblock On weighted sublinear separators.
\newblock {\em Journal of Graph Theory}, 100(2):270--280, 2022.
\newblock \href {https://doi.org/10.1002/jgt.22777}
  {\path{doi:10.1002/jgt.22777}}.

\bibitem{dvovrak2010deciding}
\Zdenek \Dvorak, Daniel Kr{\'a}l, and Robin Thomas.
\newblock Deciding first-order properties for sparse graphs.
\newblock In {\em 2010 IEEE 51st Annual Symposium on Foundations of Computer
  Science}, pages 133--142. IEEE, 2010.

\bibitem{dvorakFOBndExp13}
Zdenek \Dvorak, Daniel \Kral, and Robin Thomas.
\newblock Testing first-order properties for subclasses of sparse graphs.
\newblock {\em J. {ACM}}, 60(5):36:1--36:24, 2013.
\newblock \href {https://doi.org/10.1145/2499483} {\path{doi:10.1145/2499483}}.

\bibitem{eppsteinDegenCliques2010}
D.~Eppstein, M.~L{\"{o}}ffler, and D.~Strash.
\newblock Listing all maximal cliques in sparse graphs in near-optimal time.
\newblock In {\em International Symposium on Algorithms and Computation}, pages
  403--414. Springer, 2010.

\bibitem{flumCountingHardness2004}
J{\"o}rg Flum and Martin Grohe.
\newblock The parameterized complexity of counting problems.
\newblock {\em SIAM Journal on Computing}, 33(4):892--922, 2004.

\bibitem{fomin2021toct}
Fedor~V. Fomin, Petr~A. Golovach, Daniel Lokshtanov, Fahad Panolan, Saket
  Saurabh, and Meirav Zehavi.
\newblock Multiplicative parameterization above a guarantee.
\newblock {\em {ACM} Trans. Comput. Theory}, 13(3):18:1--18:16, 2021.
\newblock \href {https://doi.org/10.1145/3460956} {\path{doi:10.1145/3460956}}.

\bibitem{groheCoveringNowhereDense2018}
Martin Grohe, Stephan Kreutzer, Roman Rabinovich, Sebastian Siebertz, and
  Konstantinos~S. Stavropoulos.
\newblock Coloring and covering nowhere dense graphs.
\newblock {\em {SIAM} J. Discret. Math.}, 32(4):2467--2481, 2018.
\newblock \href {https://doi.org/10.1137/18M1168753}
  {\path{doi:10.1137/18M1168753}}.

\bibitem{kiersteadPlanarGraphColoring1994}
H.~A. Kierstead and W.~T. Trotter.
\newblock Planar graph coloring with an uncooperative partner.
\newblock {\em Journal of Graph Theory}, 18(6):569--584, October 1994.
\newblock \href {https://doi.org/10.1002/jgt.3190180605}
  {\path{doi:10.1002/jgt.3190180605}}.

\bibitem{muziBeingEvenSlightly2017}
Irene Muzi, Michael~P. O'Brien, Felix Reidl, and Blair~D. Sullivan.
\newblock Being {{Even Slightly Shallow Makes Life Hard}}.
\newblock In Kim~G. Larsen, Hans~L. Bodlaender, and Jean-Francois Raskin,
  editors, {\em 42nd {{International Symposium}} on {{Mathematical
  Foundations}} of {{Computer Science}} ({{MFCS}} 2017)}, volume~83 of {\em
  Leibniz {{International Proceedings}} in {{Informatics}} ({{LIPIcs}})}, pages
  79:1--79:13, Dagstuhl, Germany, 2017. Schloss Dagstuhl--Leibniz-Zentrum fuer
  Informatik.
\newblock \href {https://doi.org/10.4230/LIPIcs.MFCS.2017.79}
  {\path{doi:10.4230/LIPIcs.MFCS.2017.79}}.

\bibitem{WcolExperimental}
W.~Nadara, M.~Pilipczuk, R.~Rabinovich, F.~Reidl, and S.~Siebertz.
\newblock Empirical evaluation of approximation algorithms for generalized
  graph coloring and uniform quasi-wideness.
\newblock 103:14:1--14:16, 2018.

\bibitem{BndExpII}
Jaroslav Ne{\v s}et{\v r}il and Patrice {Ossona de Mendez}.
\newblock Grad and classes with bounded expansion {{II}}. {{Algorithmic}}
  aspects.
\newblock 29(3):777--791, 2008.
\newblock \href {https://doi.org/10.1016/j.ejc.2006.07.014}
  {\path{doi:10.1016/j.ejc.2006.07.014}}.

\bibitem{Sparsity}
Jaroslav \Nesetril and Patrice {Ossona de Mendez}.
\newblock {\em {Sparsity: Graphs, Structures, and Algorithms}}, volume~28 of
  {\em Algorithms and Combinatorics}.
\newblock Springer, 2012.

\bibitem{philipDegen2012}
Geevarghese Philip, Venkatesh Raman, and Somnath Sikdar.
\newblock Polynomial kernels for dominating set in graphs of bounded degeneracy
  and beyond.
\newblock {\em {ACM} Trans. Algorithms}, 9(1):11:1--11:23, 2012.
\newblock \href {https://doi.org/10.1145/2390176.2390187}
  {\path{doi:10.1145/2390176.2390187}}.

\bibitem{raman2008cocoa}
Venkatesh Raman, Saket Saurabh, and Sriganesh Srihari.
\newblock Parameterized algorithms for generalized domination.
\newblock In Boting Yang, Ding{-}Zhu Du, and Cao~An Wang, editors, {\em
  Combinatorial Optimization and Applications, Second International Conference,
  {COCOA} 2008, St. John's, NL, Canada, August 21-24, 2008. Proceedings},
  volume 5165 of {\em Lecture Notes in Computer Science}, pages 116--126.
  Springer, 2008.
\newblock \href {https://doi.org/10.1007/978-3-540-85097-7\_11}
  {\path{doi:10.1007/978-3-540-85097-7\_11}}.

\bibitem{reedAlgorithmicAspectsTree2003}
B.~A. Reed.
\newblock Algorithmic {{Aspects}} of {{Tree Width}}.
\newblock In Bruce~A. Reed and Cl{\'a}udia~L. Sales, editors, {\em Recent
  {{Advances}} in {{Algorithms}} and {{Combinatorics}}}, pages 85--107.
  Springer, New York, NY, 2003.
\newblock \href {https://doi.org/10.1007/0-387-22444-0_4}
  {\path{doi:10.1007/0-387-22444-0_4}}.

\bibitem{reidlCounting2023}
Felix Reidl and Blair~D. Sullivan.
\newblock A {{Color-Avoiding Approach}} to {{Subgraph Counting}} in {{Bounded
  Expansion Classes}}.
\newblock {\em Algorithmica}, 85(8):2318--2347, August 2023.
\newblock \href {https://doi.org/10.1007/s00453-023-01096-1}
  {\path{doi:10.1007/s00453-023-01096-1}}.

\bibitem{reidlCharBoundExp2019}
Felix Reidl, Fernando~S{\'a}nchez Villaamil, and Konstantinos Stavropoulos.
\newblock Characterising bounded expansion by neighbourhood complexity.
\newblock {\em European Journal of Combinatorics}, 75:152--168, January 2019.
\newblock \href {https://doi.org/10.1016/j.ejc.2018.08.001}
  {\path{doi:10.1016/j.ejc.2018.08.001}}.

\bibitem{seeseLinearTimeComputable1996}
Detlef Seese.
\newblock Linear time computable problems and first-order descriptions.
\newblock {\em Mathematical Structures in Computer Science}, 6(6):505--526,
  December 1996.
\newblock \href {https://doi.org/10.1017/S0960129500070079}
  {\path{doi:10.1017/S0960129500070079}}.

\bibitem{zhuGeneralizedCols2009}
Xuding Zhu.
\newblock Colouring graphs with bounded generalized colouring number.
\newblock {\em Discret. Math.}, 309(18):5562--5568, 2009.
\newblock URL: \url{https://doi.org/10.1016/j.disc.2008.03.024}, \href
  {https://doi.org/10.1016/J.DISC.2008.03.024}
  {\path{doi:10.1016/J.DISC.2008.03.024}}.

\end{thebibliography}

\section{Appendix}

All network data used here is avialable at \url{https://github.com/microgravitas/network-corpus}.

The following pages contains the key sparseness measures as well as the experimental measurements (time/memory) for computing the $2$-admissibility for all \numnetworks networks. The relevant statistics are:
\begin{itemize}
  \item  $\adm_2$: $2$-admissibility,
  \item  $\bar d$: Average degree,
  \item  $\text{deg}$: Degeneracy,
  \item  $\Delta$: Maximum degree, and
  \item  $m$, $n$: Number of edges / number of nodes.
\end{itemize}
The following table summarizes the statistics over all networks:
\begin{center}
\begin{tabular}{@{}lrrrrrr@{}}\small
 & $\adm_2$ & $\bar d$ & $\text{deg}$ & $\Delta$ & $m$ & $n$ \\ \midrule 
mean & 56.87  & 13.02   & 21.20 & 2198.43 & 223391.95  & 52057.52  \\
std  & 119.07 & 26.81   & 30.53 & 9402.44 & 1046710.63 & 248992.77 \\
min  & 1      & 1.48    & 1     & 3.00    & 8          & 9         \\
25\% & 4.5    & 2.74    & 3     & 29.5    & 460        & 186       \\
50\% & 16     & 4.68    & 9     & 102.5   & 3281       & 1367.5    \\
75\% & 54.5   & 13.37   & 24    & 707.5   & 41147.5    & 9049.5    \\
max  & 1016   & 205.26  & 136   & 110602  & 11095298   & 2852295   \\
\end{tabular}
\end{center}

\noindent
The following table summarizes the experimental measurements:
\begin{center}
\begin{tabular}{@{}lrrr@{}}\small
 &  & Peak & Network \\
 &  Time (s) & memory (kB) & memory (kB) \\ \midrule 
mean & 283.73  & 33689.45 & 9724.70 \\
std  & 2130.33 & 140221.88 & 39774.26 \\
min  & 0.02    & 84.46 & 84.46 \\
25\% & 0.03    & 158.15 & 126.48 \\
50\% & 0.08    & 728.77 & 287.30 \\
75\% & 1.90     & 6070.65 & 1843.79 \\
max  & 27739.26 & 1331825.50 & 369765.78 \\
\end{tabular}
\end{center}

\begin{landscape}
\begin{longtable}{@{}>{\small}
    lrrrrrrrrr
    @{}}
& & & & & & &  & Peak & Network \\
Network & $\adm_2$ & $\bar d$ & $\text{deg}$ & $\Delta$ & $m$ & $n$ & Time (s) & memory (kB) & memory (kB) \\ \midrule \endhead
AS-oregon-1 & 28 & 4.19 & 17 & 2389 & 23409 & 11174 & 0.48 & 5343.72 & 1560.89 \\
AS-oregon-2 & 52 & 5.71 & 31 & 2432 & 32730 & 11461 & 1.00 & 5593.12 & 1642.18 \\
BioGrid-Affinity-Capture-Luminescence & 8 & 2.51 & 6 & 376 & 2312 & 1840 & 0.07 & 1252.80 & 445.20 \\
BioGrid-Affinity-Capture-Ms & 183 & 15.90 & 58 & 2217 & 321887 & 40495 & 72.31 & 33281.05 & 9174.07 \\
BioGrid-Affinity-Capture-Rna & 75 & 6.22 & 54 & 3572 & 42815 & 13765 & 1.09 & 6239.51 & 1837.80 \\
BioGrid-Affinity-Capture-Western & 64 & 6.09 & 17 & 535 & 64046 & 21028 & 3.11 & 11368.61 & 3158.82 \\
BioGrid-All & 476 & 34.86 & 134 & 3620 & 1316843 & 75550 & 883.84 & 103501.16 & 28946.62 \\
BioGrid-Arabidopsis-Thaliana-Columbia & 53 & 9.20 & 26 & 1341 & 47916 & 10417 & 2.85 & 6512.89 & 1845.78 \\
BioGrid-Biochemical-Activity & 29 & 4.12 & 11 & 427 & 17746 & 8620 & 0.45 & 5211.81 & 1572.68 \\
BioGrid-Bos-Taurus & 4 & 1.87 & 3 & 27 & 424 & 454 & 0.03 & 308.45 & 170.70 \\
BioGrid-Caenorhabditis-Elegans & 70 & 7.40 & 64 & 522 & 23646 & 6394 & 0.78 & 3335.76 & 978.66 \\
BioGrid-Candida-Albicans-Sc5314 & 9 & 2.87 & 9 & 427 & 1609 & 1121 & 0.06 & 637.82 & 265.64 \\
BioGrid-Canis-Familiaris & 2 & 1.75 & 2 & 90 & 125 & 143 & 0.02 & 105.33 & 105.33 \\
BioGrid-Chemicals & 1 & 1.69 & 1 & 413 & 28093 & 33266 & 0.51 & 19875.67 & 5449.23 \\
BioGrid-Co-Crystal-Structure & 5 & 1.76 & 5 & 92 & 2021 & 2291 & 0.08 & 1246.29 & 432.72 \\
BioGrid-Co-Fractionation & 83 & 10.23 & 83 & 187 & 56354 & 11017 & 2.38 & 6922.47 & 1978.86 \\
BioGrid-Co-Localization & 9 & 2.51 & 6 & 63 & 4452 & 3543 & 0.09 & 1316.80 & 443.30 \\
BioGrid-Co-Purification & 12 & 2.76 & 12 & 1972 & 5970 & 4326 & 0.21 & 2552.52 & 808.07 \\
BioGrid-Cricetulus-Griseus & 1 & 1.65 & 1 & 30 & 57 & 69 & 0.02 & 93.68 & 93.68 \\
BioGrid-Danio-Rerio & 3 & 2.04 & 3 & 61 & 266 & 261 & 0.02 & 157.44 & 127.07 \\
BioGrid-Dictyostelium-Discoideum-Ax4 & 1 & 1.48 & 1 & 4 & 20 & 27 & 0.03 & 85.73 & 85.73 \\
BioGrid-Dosage-Growth-Defect & 9 & 3.03 & 5 & 213 & 2193 & 1447 & 0.04 & 652.36 & 261.90 \\
BioGrid-Dosage-Lethality & 8 & 2.58 & 4 & 392 & 2289 & 1776 & 0.04 & 663.84 & 261.92 \\
BioGrid-Dosage-Rescue & 11 & 3.81 & 7 & 75 & 6444 & 3380 & 0.10 & 1385.21 & 453.51 \\
BioGrid-Drosophila-Melanogaster & 83 & 12.98 & 83 & 303 & 60556 & 9330 & 3.98 & 7080.36 & 2003.70 \\
BioGrid-Emericella-Nidulans-Fgsc-A4 & 2 & 1.94 & 2 & 44 & 62 & 64 & 0.02 & 94.28 & 94.28 \\
BioGrid-Escherichia-Coli-K12-Mg1655 & 10 & 2.97 & 5 & 58 & 1889 & 1273 & 0.07 & 644.16 & 262.50 \\
BioGrid-Escherichia-Coli-K12-W3110 & 290 & 89.40 & 133 & 1187 & 181620 & 4063 & 90.82 & 12557.59 & 3859.89 \\
BioGrid-Far-Western & 3 & 1.82 & 3 & 60 & 1089 & 1199 & 0.03 & 626.99 & 258.45 \\
BioGrid-Fret & 24 & 2.82 & 19 & 51 & 2395 & 1700 & 0.08 & 662.16 & 259.59 \\
BioGrid-Gallus-Gallus & 4 & 2.11 & 4 & 110 & 436 & 413 & 0.02 & 162.87 & 126.99 \\
BioGrid-Glycine-Max & 2 & 1.77 & 2 & 13 & 39 & 44 & 0.03 & 88.31 & 88.31 \\
BioGrid-Hepatitus-C-Virus & 1 & 1.97 & 1 & 133 & 134 & 136 & 0.03 & 105.31 & 105.31 \\
BioGrid-Homo-Sapiens & 263 & 30.69 & 71 & 2882 & 369767 & 24093 & 92.36 & 30177.38 & 9625.87 \\
BioGrid-Human-Herpesvirus-1 & 3 & 2.34 & 3 & 40 & 208 & 178 & 0.03 & 105.47 & 105.47 \\
BioGrid-Human-Herpesvirus-4 & 2 & 2.02 & 2 & 154 & 326 & 323 & 0.03 & 161.15 & 127.41 \\
BioGrid-Human-Herpesvirus-5 & 1 & 1.77 & 1 & 27 & 107 & 121 & 0.02 & 105.18 & 105.18 \\
BioGrid-Human-Herpesvirus-8 & 3 & 1.93 & 3 & 119 & 691 & 716 & 0.04 & 323.39 & 171.72 \\
BioGrid-Human-Immunodeficiency-Virus-1 & 6 & 2.32 & 3 & 324 & 1319 & 1138 & 0.03 & 635.14 & 264.01 \\
BioGrid-Human-Immunodeficiency-Virus-2 & 1 & 1.58 & 1 & 6 & 15 & 19 & 0.05 & 85.88 & 85.88 \\
BioGrid-Human-Papillomavirus-16 & 2 & 2.15 & 2 & 93 & 186 & 173 & 0.04 & 105.64 & 105.64 \\
Cannes2013 & 114 & 3.82 & 27 & 15169 & 835892 & 438089 & 230.83 & 175852.25 & 48981.02 \\
CoW-interstate & 7 & 3.51 & 4 & 25 & 319 & 182 & 0.02 & 104.51 & 104.51 \\
DNC-emails & 28 & 4.70 & 17 & 402 & 4384 & 1866 & 0.06 & 1309.34 & 470.83 \\
EU-email-core & 74 & 32.58 & 34 & 345 & 16064 & 986 & 1.16 & 1376.51 & 455.46 \\
JDK\_dependency & 76 & 16.68 & 65 & 5923 & 53658 & 6434 & 1.86 & 4554.62 & 1408.96 \\
JUNG-javax & 76 & 16.43 & 65 & 5655 & 50290 & 6120 & 1.91 & 4426.94 & 1356.07 \\
NYClimateMarch2014 & 161 & 6.39 & 34 & 14687 & 327080 & 102378 & 67.52 & 49241.35 & 13919.95 \\
NZ\_legal & 68 & 14.70 & 25 & 429 & 15739 & 2141 & 0.71 & 1808.36 & 568.23 \\
Noordin-terror-loc & 4 & 2.99 & 3 & 18 & 190 & 127 & 0.02 & 105.60 & 105.60 \\
Noordin-terror-orgas & 3 & 2.81 & 3 & 21 & 181 & 129 & 0.07 & 105.41 & 105.41 \\
Noordin-terror-relation & 11 & 7.17 & 11 & 28 & 251 & 70 & 0.05 & 94.34 & 94.34 \\
ODLIS & 38 & 11.29 & 12 & 592 & 16377 & 2900 & 0.36 & 1871.51 & 590.77 \\
Opsahl-forum & 42 & 15.65 & 14 & 128 & 7036 & 899 & 0.18 & 857.88 & 348.74 \\
Opsahl-socnet & 61 & 14.57 & 20 & 255 & 13838 & 1899 & 0.57 & 1717.02 & 600.60 \\
StackOverflow-tags & 6 & 4.26 & 6 & 16 & 245 & 115 & 0.07 & 106.36 & 106.36 \\
Y2H\_union & 7 & 2.75 & 4 & 89 & 2705 & 1966 & 0.08 & 1254.80 & 423.44 \\
Yeast & 18 & 6.08 & 6 & 66 & 7182 & 2361 & 0.13 & 1361.24 & 481.72 \\
actor\_movies & 105 & 5.75 & 14 & 646 & 1470404 & 511463 & 390.95 & 343790.00 & 103718.55 \\
advogato & 86 & 15.24 & 25 & 803 & 39285 & 5155 & 2.76 & 4099.85 & 1197.26 \\
airlines & 18 & 11.04 & 13 & 130 & 1297 & 235 & 0.03 & 185.44 & 139.46 \\
american\_revolution & 3 & 2.27 & 3 & 59 & 160 & 141 & 0.03 & 105.50 & 105.50 \\
as-22july06 & 44 & 4.22 & 25 & 2390 & 48436 & 22963 & 1.64 & 10761.58 & 3127.22 \\
as-skitter & 365 & 13.08 & 111 & 35455 & 11095298 & 1696415 & 10768.63 & 1090595.10 & 308145.47 \\
as20000102 & 21 & 3.88 & 12 & 1458 & 12572 & 6474 & 0.28 & 2692.73 & 826.55 \\
autobahn & 3 & 2.56 & 2 & 5 & 478 & 374 & 0.02 & 161.48 & 126.98 \\
bahamas & 8 & 2.24 & 6 & 14902 & 246291 & 219856 & 16.98 & 84164.58 & 22366.35 \\
bergen & 12 & 10.26 & 9 & 32 & 272 & 53 & 0.02 & 91.24 & 91.24 \\
bitcoin-otc-negative & 21 & 4.06 & 16 & 227 & 3259 & 1606 & 0.06 & 695.44 & 270.60 \\
bitcoin-otc-positive & 50 & 6.67 & 20 & 788 & 18591 & 5573 & 0.70 & 2973.39 & 889.74 \\
bn-fly-drosophila\_medulla\_1 & 44 & 10.01 & 18 & 927 & 8911 & 1781 & 0.29 & 1008.70 & 360.61 \\
bn-mouse\_retina\_1 & 223 & 168.79 & 121 & 744 & 90811 & 1076 & 8.61 & 5630.18 & 1957.90 \\
boards\_gender\_1m & 25 & 9.67 & 25 & 88 & 19993 & 4134 & 0.63 & 2885.06 & 952.19 \\
boards\_gender\_2m & 7 & 2.65 & 4 & 45 & 5598 & 4220 & 0.08 & 2534.69 & 809.88 \\
ca-CondMat & 30 & 8.08 & 25 & 279 & 93439 & 23133 & 3.19 & 12682.73 & 3535.77 \\
ca-GrQc & 43 & 5.53 & 43 & 81 & 14484 & 5241 & 0.25 & 2732.29 & 872.88 \\
ca-HepPh & 238 & 19.74 & 135 & 491 & 118489 & 12006 & 12.58 & 10104.79 & 2996.30 \\
capitalist & 21 & 15.41 & 19 & 91 & 1071 & 139 & 0.04 & 120.71 & 114.86 \\
celegans & 21 & 14.46 & 10 & 134 & 2148 & 297 & 0.05 & 239.56 & 146.60 \\
chess & 88 & 15.31 & 29 & 181 & 55899 & 7301 & 4.04 & 6781.29 & 2210.87 \\
chicago & 1 & 1.77 & 1 & 12 & 1298 & 1467 & 0.03 & 639.14 & 260.24 \\
cit-HepPh & 89 & 24.37 & 30 & 846 & 420877 & 34546 & 48.77 & 37981.65 & 10494.86 \\
cit-HepTh & 128 & 25.37 & 37 & 2468 & 352285 & 27769 & 51.60 & 27419.54 & 7700.24 \\
codeminer & 5 & 2.80 & 4 & 55 & 1015 & 724 & 0.06 & 324.96 & 171.79 \\
columbia-mobility & 9 & 9.61 & 9 & 228 & 4147 & 863 & 0.05 & 441.44 & 211.37 \\
columbia-social & 19 & 17.90 & 18 & 545 & 7724 & 863 & 0.09 & 616.05 & 267.89 \\
cora\_citation & 30 & 7.70 & 13 & 377 & 89157 & 23166 & 3.76 & 12701.12 & 3480.62 \\
countries & 16 & 2.11 & 6 & 110602 & 624402 & 592414 & 155.42 & 325340.25 & 92037.16 \\
cpan-authors & 17 & 5.03 & 9 & 327 & 2112 & 839 & 0.05 & 361.74 & 184.45 \\
deezer & 60 & 18.26 & 21 & 420 & 498202 & 54573 & 43.88 & 43563.94 & 12043.83 \\
digg & 46 & 5.68 & 8 & 285 & 86312 & 30398 & 5.19 & 21257.37 & 6488.16 \\
diseasome & 11 & 3.86 & 11 & 84 & 2738 & 1419 & 0.16 & 662.72 & 267.06 \\
dogster\_friendships & 1016 & 40.05 & 135 & 46505 & 8546581 & 426820 & 27739.26 & 624314.94 & 194447.22 \\
dolphins & 6 & 5.13 & 4 & 12 & 159 & 62 & 0.03 & 94.85 & 94.85 \\
dutch-textiles & 5 & 3.75 & 5 & 31 & 90 & 48 & 0.06 & 88.75 & 88.75 \\
ecoli-transcript & 5 & 2.73 & 3 & 74 & 578 & 423 & 0.04 & 165.20 & 127.63 \\
edinburgh\_associative\_thesaurus & 197 & 25.69 & 34 & 1062 & 297094 & 23132 & 65.53 & 24529.89 & 6780.44 \\
email-Enron & 145 & 10.02 & 43 & 1383 & 183831 & 36692 & 25.06 & 24888.59 & 7480.81 \\
escorts & 45 & 4.67 & 11 & 305 & 39044 & 16730 & 2.19 & 10530.04 & 3208.14 \\
euroroad & 3 & 2.41 & 2 & 10 & 1417 & 1174 & 0.07 & 625.63 & 258.82 \\
eva-corporate & 4 & 1.85 & 3 & 552 & 6711 & 7253 & 0.11 & 4950.11 & 1432.23 \\
exnet-water & 3 & 2.55 & 2 & 10 & 2416 & 1893 & 0.05 & 1231.98 & 433.43 \\
facebook-links & 191 & 25.64 & 52 & 1098 & 817090 & 63731 & 188.07 & 74939.01 & 20580.54 \\
foldoc & 36 & 13.70 & 12 & 728 & 91471 & 13356 & 3.25 & 9015.71 & 2506.90 \\
foodweb-caribbean & 23 & 13.47 & 13 & 196 & 3313 & 492 & 0.08 & 397.60 & 201.50 \\
foodweb-otago & 23 & 11.80 & 14 & 45 & 832 & 141 & 0.03 & 113.57 & 112.90 \\
football & 11 & 10.66 & 8 & 12 & 613 & 115 & 0.05 & 108.00 & 108.00 \\
google+ & 38 & 3.32 & 12 & 2761 & 39194 & 23628 & 1.11 & 10679.88 & 3099.64 \\
gowalla & 202 & 9.67 & 51 & 14730 & 950327 & 196591 & 180.06 & 116517.97 & 31640.44 \\
haggle & 40 & 15.50 & 39 & 101 & 2124 & 274 & 0.04 & 239.14 & 148.30 \\
hex & 4 & 5.62 & 3 & 6 & 930 & 331 & 0.03 & 167.64 & 130.15 \\
hypertext\_2009 & 43 & 38.87 & 28 & 98 & 2196 & 113 & 0.06 & 173.63 & 132.26 \\
ia-email-univ & 21 & 9.62 & 11 & 71 & 5451 & 1133 & 0.09 & 762.10 & 293.13 \\
ia-infect-dublin & 21 & 13.49 & 17 & 50 & 2765 & 410 & 0.08 & 276.06 & 157.06 \\
ia-reality & 12 & 2.26 & 5 & 261 & 7680 & 6809 & 0.25 & 2635.98 & 803.31 \\
infectious & 21 & 13.49 & 17 & 50 & 2765 & 410 & 0.06 & 274.55 & 157.04 \\
ingredients & 475 & 197.46 & 136 & 3426 & 431654 & 4372 & 97.07 & 27386.90 & 10230.61 \\
iscas89-s1196 & 4 & 2.85 & 2 & 16 & 537 & 377 & 0.03 & 162.96 & 127.80 \\
iscas89-s1238 & 5 & 3.00 & 2 & 18 & 625 & 416 & 0.02 & 167.24 & 127.98 \\
iscas89-s13207 & 6 & 2.73 & 4 & 37 & 3406 & 2492 & 0.05 & 1273.01 & 440.71 \\
iscas89-s1423 & 3 & 2.62 & 2 & 17 & 554 & 423 & 0.03 & 163.58 & 127.24 \\
iscas89-s1488 & 7 & 3.37 & 3 & 53 & 779 & 463 & 0.03 & 315.73 & 174.40 \\
iscas89-s1494 & 7 & 3.37 & 3 & 56 & 796 & 473 & 0.03 & 316.10 & 174.40 \\
iscas89-s15850 & 4 & 2.47 & 4 & 25 & 4004 & 3247 & 0.05 & 1296.03 & 418.71 \\
iscas89-s27 & 1 & 1.78 & 1 & 3 & 8 & 9 & 0.02 & 84.46 & 84.46 \\
iscas89-s298 & 3 & 2.85 & 2 & 11 & 131 & 92 & 0.02 & 94.33 & 94.33 \\
iscas89-s344 & 3 & 2.44 & 2 & 9 & 122 & 100 & 0.02 & 93.55 & 93.55 \\
iscas89-s349 & 3 & 2.49 & 2 & 9 & 127 & 102 & 0.02 & 94.15 & 94.15 \\
iscas89-s35932 & 2 & 2.55 & 2 & 1440 & 15961 & 12515 & 0.13 & 5156.41 & 1496.98 \\
iscas89-s382 & 4 & 2.90 & 2 & 18 & 168 & 116 & 0.02 & 105.42 & 105.42 \\
iscas89-s38417 & 6 & 2.24 & 4 & 39 & 10635 & 9500 & 0.17 & 5017.87 & 1488.70 \\
iscas89-s38584 & 7 & 2.74 & 4 & 54 & 12573 & 9193 & 0.28 & 5036.81 & 1503.33 \\
iscas89-s386 & 4 & 3.51 & 3 & 23 & 200 & 114 & 0.03 & 106.01 & 106.01 \\
iscas89-s400 & 4 & 3.01 & 2 & 19 & 182 & 121 & 0.03 & 105.44 & 105.44 \\
iscas89-s444 & 4 & 3.07 & 2 & 19 & 206 & 134 & 0.04 & 105.50 & 105.50 \\
iscas89-s510 & 4 & 2.92 & 2 & 12 & 251 & 172 & 0.03 & 105.32 & 105.32 \\
iscas89-s526 & 4 & 3.38 & 3 & 12 & 270 & 160 & 0.02 & 105.60 & 105.60 \\
iscas89-s526n & 4 & 3.37 & 3 & 12 & 268 & 159 & 0.03 & 105.61 & 105.61 \\
iscas89-s5378 & 5 & 2.32 & 3 & 10 & 1639 & 1411 & 0.06 & 633.07 & 257.70 \\
iscas89-s641 & 4 & 2.88 & 3 & 12 & 144 & 100 & 0.04 & 93.57 & 93.57 \\
iscas89-s713 & 4 & 2.63 & 3 & 12 & 180 & 137 & 0.03 & 105.07 & 105.07 \\
iscas89-s820 & 9 & 4.02 & 3 & 48 & 480 & 239 & 0.04 & 160.27 & 129.70 \\
iscas89-s832 & 9 & 4.07 & 3 & 49 & 498 & 245 & 0.04 & 161.36 & 129.52 \\
iscas89-s9234 & 4 & 2.39 & 4 & 18 & 2370 & 1985 & 0.06 & 1243.79 & 419.55 \\
iscas89-s953 & 3 & 2.73 & 2 & 12 & 454 & 332 & 0.03 & 160.63 & 127.39 \\
jazz & 30 & 27.70 & 29 & 100 & 2742 & 198 & 0.05 & 211.72 & 141.45 \\
karate & 4 & 4.59 & 4 & 17 & 78 & 34 & 0.03 & 88.95 & 88.95 \\
lederberg & 47 & 9.98 & 15 & 1103 & 41532 & 8324 & 1.93 & 5968.18 & 1852.88 \\
lesmiserables & 9 & 6.60 & 9 & 36 & 254 & 77 & 0.04 & 94.71 & 94.71 \\
link-pedigree & 2 & 2.51 & 2 & 14 & 1125 & 898 & 0.03 & 615.70 & 258.84 \\
linux & 106 & 13.83 & 23 & 9338 & 213217 & 30834 & 24.34 & 25761.40 & 8177.22 \\
livemocha & 610 & 42.13 & 92 & 2980 & 2193083 & 104103 & 4456.76 & 163105.27 & 49622.93 \\
loc-brightkite\_edges & 85 & 7.35 & 52 & 1134 & 214078 & 58228 & 18.08 & 43914.91 & 13050.88 \\
location & 16 & 2.61 & 5 & 12189 & 293697 & 225486 & 21.74 & 84379.95 & 23410.76 \\
mag\_geology\_coauthor & 31 & 3.12 & 13 & 1153 & 4448428 & 2852295 & 5345.90 & 1331825.50 & 369765.78 \\
marvel & 58 & 9.95 & 18 & 1625 & 96662 & 19428 & 5.46 & 12827.92 & 3554.43 \\
mg\_casino & 9 & 5.98 & 9 & 94 & 326 & 109 & 0.03 & 96.15 & 96.15 \\
mg\_forrestgump & 8 & 5.77 & 8 & 89 & 271 & 94 & 0.03 & 95.35 & 95.35 \\
mg\_godfatherII & 8 & 5.62 & 8 & 34 & 219 & 78 & 0.02 & 94.38 & 94.38 \\
mg\_watchmen & 7 & 5.29 & 7 & 33 & 201 & 76 & 0.03 & 94.48 & 94.48 \\
minnesota & 3 & 2.50 & 2 & 5 & 3303 & 2642 & 0.04 & 1257.38 & 431.87 \\
moreno\_health & 12 & 8.24 & 7 & 27 & 10455 & 2539 & 0.12 & 1525.81 & 484.67 \\
mousebrain & 141 & 151.07 & 111 & 205 & 16089 & 213 & 0.20 & 882.01 & 443.32 \\
movielens\_1m & 554 & 205.26 & 135 & 3428 & 1000209 & 9746 & 777.59 & 69027.36 & 22602.53 \\
movies & 5 & 3.80 & 3 & 19 & 192 & 101 & 0.03 & 94.19 & 94.19 \\
muenchen-bahn & 3 & 2.59 & 2 & 13 & 578 & 447 & 0.03 & 172.26 & 126.32 \\
munin & 3 & 2.11 & 3 & 66 & 1397 & 1324 & 0.04 & 631.20 & 258.60 \\
netscience & 19 & 3.75 & 19 & 34 & 2742 & 1461 & 0.11 & 659.91 & 263.98 \\
offshore & 20 & 3.63 & 13 & 37336 & 505965 & 278877 & 41.04 & 164611.63 & 45987.69 \\
openflights & 52 & 10.67 & 28 & 242 & 15677 & 2939 & 0.36 & 1808.73 & 588.65 \\
p2p-Gnutella04 & 23 & 7.35 & 7 & 103 & 39994 & 10876 & 1.18 & 6104.80 & 1691.07 \\
panama & 62 & 2.52 & 62 & 7015 & 702437 & 556686 & 191.64 & 321508.06 & 90229.39 \\
paradise & 55 & 2.93 & 23 & 35359 & 794545 & 542102 & 220.41 & 324180.06 & 92879.70 \\
photoviz\_dynamic & 7 & 3.24 & 4 & 29 & 610 & 376 & 0.02 & 165.08 & 127.98 \\
pigs & 3 & 2.41 & 2 & 39 & 592 & 492 & 0.05 & 310.38 & 171.13 \\
polblogs & 72 & 27.31 & 36 & 351 & 16715 & 1224 & 0.38 & 1401.45 & 470.51 \\
polbooks & 9 & 8.40 & 6 & 25 & 441 & 105 & 0.05 & 97.88 & 97.88 \\
pollination-carlinville & 53 & 20.34 & 18 & 157 & 15255 & 1500 & 0.58 & 1365.70 & 453.09 \\
pollination-daphni & 26 & 7.36 & 9 & 124 & 2933 & 797 & 0.08 & 419.46 & 198.24 \\
pollination-tenerife & 6 & 3.79 & 4 & 17 & 129 & 68 & 0.04 & 94.84 & 94.84 \\
pollination-uk & 76 & 33.97 & 35 & 256 & 16712 & 984 & 0.83 & 1431.23 & 460.78 \\
ratbrain & 78 & 91.57 & 67 & 497 & 23030 & 503 & 0.27 & 1234.35 & 550.80 \\
reactome & 184 & 46.64 & 62 & 855 & 147547 & 6327 & 7.05 & 9729.96 & 3316.78 \\
residence\_hall & 21 & 16.95 & 11 & 56 & 1839 & 217 & 0.04 & 166.53 & 127.13 \\
rhesusbrain & 37 & 25.24 & 19 & 111 & 3054 & 242 & 0.10 & 281.81 & 160.76 \\
roget-thesaurus & 11 & 7.22 & 6 & 28 & 3648 & 1010 & 0.04 & 684.80 & 281.48 \\
seventh-graders & 16 & 17.24 & 13 & 28 & 250 & 29 & 0.03 & 90.82 & 90.82 \\
slashdot\_threads & 74 & 4.60 & 13 & 2915 & 117378 & 51083 & 10.16 & 22970.25 & 6330.25 \\
soc-Epinions1 & 268 & 10.69 & 67 & 3044 & 405740 & 75879 & 104.16 & 53521.95 & 16119.82 \\
soc-Slashdot0811 & 232 & 12.13 & 54 & 2539 & 469180 & 77360 & 209.41 & 57199.10 & 16037.10 \\
soc-advogato & 86 & 15.26 & 25 & 807 & 39432 & 5167 & 2.56 & 4113.99 & 1200.26 \\
soc-gplus & 38 & 3.32 & 12 & 2761 & 39194 & 23628 & 1.26 & 10679.76 & 3099.64 \\
soc-hamsterster & 51 & 13.71 & 24 & 273 & 16630 & 2426 & 0.68 & 1860.52 & 609.35 \\
soc-wiki-Vote & 16 & 6.56 & 9 & 102 & 2914 & 889 & 0.07 & 415.82 & 194.76 \\
sp\_data\_school\_day\_2 & 57 & 46.55 & 33 & 88 & 5539 & 238 & 0.11 & 429.13 & 200.84 \\
teams & 127 & 2.92 & 9 & 2671 & 1366466 & 935591 & 1031.76 & 648392.60 & 180105.19 \\
train\_bombing & 10 & 7.59 & 10 & 29 & 243 & 64 & 0.02 & 96.34 & 96.34 \\
tv\_tropes & 655 & 42.50 & 115 & 12400 & 3232134 & 152093 & 6861.11 & 261756.61 & 80649.22 \\
twittercrawl & 237 & 84.70 & 132 & 1084 & 154824 & 3656 & 60.97 & 10744.15 & 3414.48 \\
ukroad & 3 & 2.53 & 3 & 5 & 15641 & 12378 & 0.20 & 5052.68 & 1392.84 \\
unicode\_languages & 7 & 2.89 & 4 & 141 & 1255 & 868 & 0.03 & 332.53 & 173.08 \\
wafa-ceos & 7 & 7.15 & 5 & 22 & 93 & 26 & 0.04 & 86.51 & 86.51 \\
wafa-eies & 27 & 28.98 & 24 & 44 & 652 & 45 & 0.03 & 97.84 & 97.84 \\
wafa-hightech & 13 & 15.14 & 12 & 20 & 159 & 21 & 0.02 & 87.76 & 87.76 \\
wafa-padgett & 3 & 3.60 & 3 & 8 & 27 & 15 & 0.02 & 85.95 & 85.95 \\
web-EPA & 16 & 4.17 & 6 & 175 & 8909 & 4271 & 0.36 & 2611.49 & 802.68 \\
web-california & 26 & 5.17 & 11 & 199 & 15969 & 6175 & 0.52 & 2909.39 & 859.52 \\
web-google & 17 & 4.27 & 17 & 59 & 2773 & 1299 & 0.05 & 665.56 & 264.85 \\
wiki-vote & 162 & 28.32 & 53 & 1065 & 100762 & 7115 & 26.17 & 7802.97 & 2226.58 \\
wikipedia-norm & 59 & 16.34 & 22 & 455 & 15372 & 1881 & 0.53 & 1764.65 & 594.99 \\
win95pts & 3 & 2.26 & 2 & 9 & 112 & 99 & 0.02 & 94.17 & 94.17 \\
windsurfers & 15 & 15.63 & 11 & 31 & 336 & 43 & 0.03 & 92.03 & 92.03 \\
word\_adjacencies & 11 & 7.59 & 6 & 49 & 425 & 112 & 0.04 & 99.66 & 99.66 \\
zewail & 55 & 16.29 & 18 & 331 & 54182 & 6651 & 2.87 & 4931.46 & 1441.94 \\ \midrule
\end{longtable}

\end{landscape}

\end{document}